\newcommand{\rmv}[1]{} 
\newcommand{\prova}[1]{}
\begin{document}
\title{Structure-aware combinatorial group testing: a new method for pandemic screening
}

\author{Thaís Bardini Idalino\inst{1} \and
Lucia Moura\inst{2}}

\authorrunning{T. B. Idalino and L. Moura}
\titlerunning{Structure-aware combinatorial group testing}
%
\institute{Universidade Federal de Santa Catarina, Santa Catarina, Brazil
\email{thais.bardini@ufsc.br}\\
\and
University of Ottawa, Ottawa, Canada
\email{lmoura@uottawa.ca}}
\maketitle              
\begin{abstract}
Combinatorial group testing (CGT) is used to identify defective items from a set of items by grouping them together and performing a small number of tests on the groups.
Recently, group testing has been used to design efficient COVID-19 testing, so that resources are saved while still identifying all infected individuals. Due to test waiting times, a focus is given to non-adaptive CGT, where groups are designed a priori and all tests can be done in parallel. The design of the groups can be done using Cover-Free Families (CFFs). The main assumption behind CFFs is that a small number $d$ of positives are randomly spread across a  population of $n$ individuals. However, for infectious diseases, it is reasonable to assume that infections show up in clusters of individuals with high contact (children in the same classroom within a school, households within a neighbourhood, students taking the same courses within a university, people seating close to each other in a stadium). The general structure of these communities can be modeled using hypergraphs, where vertices are items to be tested and edges represent clusters containing high contacts. We consider hypergraphs with non-overlapping edges and overlapping edges (first two examples and last two examples, respectively). We give constructions of what we call \emph{structure-aware} CFF, which uses the structure of the underlying hypergraph.  We revisit old CFF constructions, boosting the number of defectives they can identify by taking the hypergraph structure into account. We also provide new constructions based on hypergraph parameters.
\end{abstract}
\section{Introduction}

Group testing literature dates back to the Second World War as an efficient way of testing blood samples for syphilis screening~\cite{DuHwang,Dorfman}. The idea consists of grouping blood samples together before testing, so that negative results could save hundreds of individual tests. This idea was then applied to many other areas: screening vaccines for contamination, building clone libraries for DNA sequences, data forensics for altered documents, modification tolerant digital signatures~\cite{far1997,eps2007,IPL,IWOCA,AMC,PhDThesis,indocrypt,TCS}.
Currently, it is considered a promising scheme for saving time and resources in COVID-19 testing~\cite{TNYT,nature,Nikolopoulos1,Nikolopoulos2,Verdun}. In fact, several countries, such as China, India, Germany and the United States, have adopted group testing as a way of saving time and resources~\cite{nature}. 

In combinatorial group testing (CGT), we are given $n$ items of which at most $d$ are defective (or contaminated). We assume we can test any subset of items, and if the result of the test is positive the subset contains at least one defective (contaminated) item, and if it is negative all items in the subset are non-defective (uncontaminated). The main goal is to minimize the number $t$ of tests for given $n$ and $d$, while determining all defective items.
For a comprehensive treatment, see the text by Du and Hwang~\cite{DuHwang}.

Group testing may be adaptive or non-adaptive~\cite{DuHwang}. \emph{Adaptive} CGT allows us to decide the next tests according to the results of previous tests. This is the case of the binary spliting algorithm, which meets the information theoretical lower bound of  $d \log (n/d)$ tests. In this paper, we focus on \emph{non-adaptive} CGT. Due to test waiting times, non-adaptive CGT  is a useful approach, since we decide all groups at once and can run tests in parallel. In addition, in non-adaptive CGT, we can have more balanced sizes of the groups (items in each test), which is limited in some real applications. For COVID-19 screening, researchers are testing how many samples can be grouped together without compromising the detection of positive results~\cite{nature,Verdun}. 

Items and tests in CGT can be represented by a binary matrix where items correspond to columns and tests correspond to rows, where a 1 means a test uses an item. A $d$-cover free family (or $d$-CFF($t,n$)) is a $t \times n$ matrix with special properties that guarantee the identification of $d$ defective items among $n$ items using $t$ tests and a simple decoding algorithm that takes time $O(tn)$ (see Section~\ref{sec:cff}).

In this paper, we are interested in applications where the defective items are more likely to appear together in predictable subsets of items, which are given as edges of a hypergraph. For example, if we want to monitor a highly transmissible disease among students in a school, classrooms can be the edges (or regions) where it is more likely that if there is one infected individual we may find many. In this way, outbreaks may be detected early while only a few classrooms have infected students.  In this model, we are given a hypergraph where items are vertices and regions are edges such that there are at most $r$ edges that together contain all defective vertices. The objective is still to minimize the number of tests while identifying all defective items. A weaker version of the problem consists of simply identify all infected edges.  In this paper we initiate a more systematic study of how to build CFFs for combinatorial group testing under the hypergraph model, which we call \emph{structure-aware} cover-free families.

{\em Recent related work.}
A similar hypergraph model has been recently proposed as group testing in connected and overlapping communities in the context of COVID-19 testing~\cite{Nikolopoulos1,Nikolopoulos2} and as variable cover-free families motivated by problems in cryptography~\cite{PhDThesis}. 
The work in~\cite{Nikolopoulos1,Nikolopoulos2} span both adaptive and nonadaptive CGT algorithms, but there is not much emphasis on CGT matrix contructions. Our work is on efficient cover-free family constructions for the hypergraph model.
The idea of structure-aware CFF was introduced in the first author's PhD thesis~\cite{PhDThesis} under the name of \emph{variable} CFFs (VCFFs) with an equivalent definition. This was inspired by applications in cryptography, where they would allow for location of clustered modifications in a signed document when using modification-tolerant digital signatures.

{\em Our results and paper structure.}
Basic concepts for cover-free families are given in Section~\ref{sec:cff}. The new definitions of structure-aware cover free families and edge-identifying CFFs are given in Section~\ref{sec:VCFF} along with related decoding algorithms. 
CFF constructions for hypegraphs with non-overlaping edges are given in Section~\ref{sec:disjoint}. In particular, we revisit known $d$-CFF constructions (Sperner, product, array group testing, 
polynomials in finite fields) and show how they can be viewed as a \emph{structure-aware} CFF, allowing a much larger defect identification when items are clustered into conveniently chosen hypergraphs. We exemplify how these hypergraphs relate to realistic community-like structures. In a generalization of the Sperner construction ($r=1$) we also give results under the more realistic assumption of limited number of samples per tests (Section~\ref{sec:sperner}). CFF constructions for the more general case of hypergraphs with overlapping edges are given in Section~\ref{sec:overlap}. We give constructions for both $r=1$ and $r>1$ using edge-colouring and strong edge-colouring of hypergraphs, to partition the  hypergraph into non-overlapping subgraphs that can be constructed using results from the previous section. Proofs are 
in the appendix for refereeing purposes.

\section{Cover-Free Families} \label{sec:cff}

Cover-free families were first introduced by Kautz and Singleton~\cite{Kautz} in the context of \emph{superimposed codes}. 
They are found under different names, such as $d$-disjunct matrices and strongly selective families \cite{DuHwang,PR}. We can define $d$-CFF via a matrix or a set system.

\begin{definition}[CFF via matrix]\label{defCFFmatrix} Let $d$ be a positive integer. A $d$-cover-free family, denoted $d$-CFF$(t,n)$, is a $t \times n$ 0-1 matrix where the submatrix given by any set of $d+1$ columns contains a permutation matrix (each row of an identity of order $d+1$) among its rows.
\end{definition}

A set system $\mathcal{F} = (X, \mathcal{B})$ consists of a set $X$ and a collection $\mathcal{B}$ of subsets of $X$. The {\em set system associate to matrix $\mathcal{M}$} is the set system $\mathcal{F}_{\mathcal{M}} = (X, \mathcal{B})$ with $X$ corresponding to rows and $\mathcal{B}$  corresponding to columns of $\mathcal{M}$, where $B_i\subseteq \mathcal{B}$ has column $i$ as its characteristic vector, $1\leq i \leq n$.
A $d$-CFF can be equivalently defined in terms of its set system $\mathcal{F}_{\mathcal{M}}$, by specifying that no set of $d$ columns ``covers'' any other column.
 
\begin{definition}[CFF via set system]\label{defCFF} Let $d$ be a positive integer. A $d$-cover-free family, denoted $d$-CFF$(t,n)$, is a set system $\mathcal{F} = (X, \mathcal{B})$ with $|X| = t$ and $|\mathcal{B}| = n$ such that for any $d+1$ subsets $B_{i_0}, B_{i_1}, \ldots, B_{i_d} \in \mathcal{B}$, we have
	\begin{equation}\label{property:cff}
	|B_{i_0} \setminus \bigcup_{j=1}^{d}B_{i_j}| \geq 1.
	\end{equation}	
\end{definition}

Next we show an example of a $2$-CFF($9,12$), which can be used to test $n=12$ items with $t = 9$ tests and identify up to $d=2$ defective items.

\begin{center}
  {\tiny	\centering	\setcounter{MaxMatrixCols}{12}
	$\mathcal{M}= \begin{pmatrix}
	1&  0 & 0 & 1 & 0 & 0&  1 & 0 & 0 & 1 & 0 & 0\\
	1 & 0 & 0 & 0 & 1 & 0 & 0 & 1 & 0 & 0 & 1 & 0 \\
	1 & 0 & 0 & 0 & 0 & 1 & 0 & 0 & 1 & 0 & 0 & 1 \\
	0 & 1 & 0 & 1 & 0 & 0 & 0 & 0 & 1 & 0 & 1 & 0 \\
	0 & 1 & 0 & 0 & 1 & 0 & 1 & 0 & 0 & 0 & 0 & 1 \\
	0 & 1 & 0 & 0 & 0 & 1 & 0 & 1 & 0 & 1 & 0 & 0 \\
	0 & 0 & 1 & 1 & 0 & 0 & 0 & 1 & 0 & 0 & 0 & 1 \\
	0 & 0 & 1 & 0 & 1 & 0 & 0 & 0 & 1&  1 & 0 & 0 \\
	0 & 0 & 1 & 0 & 0 & 1 & 1 & 0 & 0 & 0 & 1 & 0  \\
	\end{pmatrix}$}
	{\small 
	$\begin{array}{l}
	X = \{1, 2, \ldots, 9\}\\
	B_1 = \{1,2,3\}, B_2 = \{4,5,6\}, \ldots, B_{12} = \{3,5,7\}\\
	\mathcal{B} = \{B_1, B_2, \ldots, B_{12}\}
	\end{array}$}
\end{center}

After running the tests on groups of items according to the rows of a $d$-CFF matrix $\mathcal{M}$, we can run a simple algorithm to identify the invalid items. 
When we apply Algorithm~\ref{alg:general} with a $d$-CFF matrix $\mathcal{M}$ and the number of defectives is indeed bounded by $d$, then after the first loop $x$ has at most $d$ nonzero components. 
So for $d$-CFF, the second loop can be removed and substituted by  
a simple check that the number of 1's in $x$ does not exceed $d$; in this case, the output will be  Boolean, i.e. every component is in $\{0,1\}$, and correct. We give this more general algorithm, used in Section~\ref{sec:VCFF}.
In the case of other types of matrices or when the hypothesis of testing are not satisfied, the algorithm classifies the items into three types of defective status (yes, no, maybe) according to the information provided by test results. Assuming correct outcome for group testing, the items with $x_i\in\{0,1\}$ do not give false positive/negative results.

\vspace{-10pt}
\begin{algorithm}[h]
\caption{Non-adaptive CGT algorithm to identify invalid items}\label{alg:general}
\begin{algorithmic}
\item \textbf{Input:} Group testing matrix $\mathcal{M}$ and test result $y = (y_1, y_2, \ldots, y_t)$, with $y_i = 1$ iff the i-th test was positive.
\item \textbf{Output:} Vector $x = (x_1, x_2, \ldots, x_n)$, $x_j = 1,0, 0.5$ if the j-th item is defective,  nondefective, unknown, respectively.
\item $x\gets (1, ..., 1)$
\For{i = 1, \ldots, t}
  \For{j = 1, \ldots, n}
       \If{$\mathcal{M}_{i,j} = 1$ and $y_i = 0$}
        $x_j \gets 0$
        \EndIf
    \EndFor
\EndFor

\For{j such that $x_j=1$} 

 \If {$\exists i$ such that ($\mathcal{M}_{i,j}=1$ and ($x_\ell=0$, $\forall \ell\not=j$ with $\mathcal{M}_{i,\ell}=1$))}
 \State{$x_j \gets 1$} \Comment{Item $j$ is on a failing test together with only non-defective items}
 \Else  \ $x_j \gets 0.5$ \Comment{Can't guarantee $j$ is the cause of failures but maybe defective}
   \EndIf
 \EndFor
 
\State \Return $x$
\end{algorithmic}
\end{algorithm}
\vspace{-10pt}

For a given $n$ and $d$, we are interested in constructing $d$-CFFs with the smallest possible $t$, so we define $t(d,n) = \text{min}\{t: \exists \ d\text{-CFF}(n,t)\}$. For $d = 1$, Sperner's theorem gives an optimal construction for $1$-CFFs. The value $t$ grows as $\log_2 n$ as $n \rightarrow \infty$, which meets the information theoretical lower bound on the number of bits necessary to uniquely distinguish the $n$ inputs. For $d \geq 2$, the best known lower bound on $t$ for $d$-CFF($t,n$) is given by $t(d,n) \geq c\frac{d^2}{\log d } \log n$ for some constant $c$~\cite{furedi,Ruszinko,wei2006}, with $c$ proven to be $\approx 1/4$ in~\cite{furedi} and $\approx 1/8$ in~\cite{Ruszinko}.

For $d\geq 2$, there are several approaches to construct $d$-CFFs, for example, we can use codes and combinatorial designs~\cite{monotone}. Probabilistic methods usually provide the best existence results known, and derandomization techniques can be used to yield efficient algorithms to construct CFFs, such as in~\cite{Bshouty,PR,Vaccaro2,Vaccaro}. 
Using this approach, polynomial time algorithms exist to construct a $d$-CFF($n,t$) with $t = \Theta(d^2 \log n)$~\cite{Bshouty,PR,Vaccaro2,Vaccaro}.

\section{Structure-aware Cover-Free Families} \label{sec:VCFF}
In this section, we define structure-aware cover-free families (SCFFs) by adding a hypergraph structure to a CFF. Vertices correspond to columns and edges specify sets of columns where defective items may appear more likely together. We use the assumption that defective items are contained in a small number $r$ of edges inside of which any number of defective items may be found.  
For example, the outbreak of a disease in a school/university could be detected by associating vertices with students, edges with classrooms/courses; even if the number of infected students is high, the CFF would detect them as long as they are concentrated in a small number of classrooms/courses.

\begin{definition}[Structure-aware CFFs]\label{variabledef}
    Let $n,t> 0$ and $r\geq0$ be integers.
    Let $\mathcal{H} = ([1,n], \mathcal{S})$ be a hypergraph with $n$ vertices and $m$ edges, and let $\mathcal{M}$ be a $t \times n$ binary matrix with associated set system $\mathcal{F}_{\mathcal{M}} = ([1,n], \mathcal{B})$, $\mathcal{B} = \{B_1, \ldots, B_n\}$.  Matrix $\mathcal{M}$ is a {\em structure-aware cover-free family}, denoted $(\mathcal{S},r)$-CFF($t,n$), if for any $r$-set of hyperedges $\{S_1,\ldots,S_r\}\subseteq \mathcal{S}$, and for any $I \subseteq \cup_{j=1}^r S_j$ and any $i_0\in [1,n]\setminus I$, we have
	\begin{equation}
	\bigg|B_{i_0} \Big\backslash \bigg(\bigcup_{i \in I}B_{i}\bigg)\bigg| \geq 1. \label{eqn:vcff}
	\end{equation}	
\end{definition}

We observe that a $d$-CFF($t,n$) is equivalent to an $(\mathcal{S},d)$-CFF($t,n$) where edges are singleton vertices $\mathcal{S}=\{\{1\},\{2\},\ldots \{n\}\}$.

We now consider how the status of edges influence the detectability of defective items.
An edge is {\em defective} if it contains a defective vertex and {\em non-defective}, otherwise. A set of edges is a {\em defect cover} if the set of defective vertices is contained in the union of these edges; such a defective cover is {\em minimal} if no proper subset is a defect cover. A minimal defect cover is always contained in the set of defective edges, but the number of defective edges may be much larger 
than the size of a defect cover for hypergraphs with overlaping edges. The next proposition shows that a structure-aware CFF ability to detect defectives only depends on the cardinality of a minimum defect cover being bounded by $r$.

\begin{proposition}
Let $\mathcal{H} = ([1,n], \mathcal{S})$ be a hypergraph, $\cal{M}$ be an $(\mathcal{S},r)$-CFF($t,n$) and $y\in\{0,1\}^t$ be the result of tests given by $\cal{M}$ on items $1,\ldots, n$. 
If $\mathcal{H}$ has a defect cover with at most $r$ edges then Algorithm~\ref{alg:general} on inputs $(\mathcal{M},y)$ returns a Boolean output $x$ such that $x_i=1$ if and only if item $i$ is defective.
\end{proposition}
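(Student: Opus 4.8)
The plan is to analyze the two loops of Algorithm~\ref{alg:general} separately and show that, under the hypothesis that there is a defect cover with at most $r$ edges, the output is Boolean and correct. Let $D\subseteq[1,n]$ be the true set of defective items, and fix a defect cover $\{S_1,\ldots,S_{r'}\}$ with $r'\le r$, so that $D\subseteq\bigcup_{j=1}^{r'}S_j$. First I would establish soundness of the first loop: after it runs, $x_j=0$ for every non-defective $j$ that shares a test with some defective item, and $x_j=1$ is still possible only if $j\in D$ or $j$ is non-defective but every test containing $j$ also contains a defective item (hence is positive). The key observation is that since group testing outcomes are assumed correct, $y_i=0$ exactly when test $i$ contains no defective item; thus the first loop never zeroes out a genuinely defective item, and $D\subseteq\{j : x_j=1 \text{ after loop 1}\}$.

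Next I would invoke the structure-aware property to pin down the set $Z:=\{j : x_j=1 \text{ after loop 1}\}$ precisely as $Z=D$. Suppose for contradiction that some non-defective $i_0\in Z$. Set $I=D$. Since $D$ is contained in the union of $r'\le r$ edges, and $i_0\notin D=I$, Definition~\ref{variabledef} (inequality~\eqref{eqn:vcff}, padding the edge set out to exactly $r$ edges if $r'<r$) gives a row $i^*$ with $\mathcal{M}_{i^*,i_0}=1$ and $\mathcal{M}_{i^*,\ell}=0$ for all $\ell\in D$. Then test $i^*$ contains $i_0$ but no defective item, so $y_{i^*}=0$, and the first loop would have set $x_{i_0}\gets 0$, contradicting $i_0\in Z$. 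Hence $Z\subseteq D$, and combined with $D\subseteq Z$ from the previous step we get $Z=D$.

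Finally I would check the second loop preserves correctness. For each $j$ with $x_j=1$ (i.e.\ $j\in D$), I claim the ``if'' branch fires, so $x_j$ stays $1$ and no component is ever set to $0.5$, making the output Boolean. Indeed, apply~\eqref{eqn:vcff} with $I=D\setminus\{j\}$ and $i_0=j$: since $D\setminus\{j\}$ is still contained in the union of at most $r$ edges and $j\notin I$, there is a row $i$ with $\mathcal{M}_{i,j}=1$ and $\mathcal{M}_{i,\ell}=0$ for all $\ell\in D\setminus\{j\}$; for this row, every $\ell\ne j$ with $\mathcal{M}_{i,\ell}=1$ is non-defective, hence has $x_\ell=0$ (as $Z=D$), which is exactly the condition tested in the ``if''. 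Therefore $x_j\gets 1$ and the loop produces a Boolean $x$ with $x_i=1\iff i\in D$.

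The main obstacle, and the point requiring the most care, is the bookkeeping around the quantifier ``any $r$-set of hyperedges'' versus a defect cover of size $r'<r$: one must argue that~\eqref{eqn:vcff} can be applied by choosing $r$ edges whose union still contains the relevant index set $I$ (padding with arbitrary extra edges only enlarges the union, so the separation guarantee is not lost), and that $i_0\notin I$ in each application. Everything else is a direct unwinding of the two loops together with the assumption that test outcomes faithfully reflect the presence of a defective item.
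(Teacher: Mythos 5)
Your proof is correct and follows essentially the same route as the paper's: invoke the cover-free property with an index set drawn from the union of the defect-cover edges to produce a row separating a given item from all defectives, then use that row to show non-defectives are zeroed in the first loop and defectives survive the second loop without being downgraded to $0.5$. The only differences are cosmetic — you take $I=D$ (resp.\ $D\setminus\{j\}$) where the paper takes the full union of the cover edges minus $i_0$, and you are somewhat more explicit than the paper about padding a cover of size $r'<r$ up to an $r$-set of edges.
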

\begin{proof}
Let $\mathcal{DC}=\{e_1,e_2,\ldots,e_{\ell}\}$ be a defect cover with $\ell\leq r$.
Let ${i_0}\in[1,n]$ be an item and
take  $I=(\cup_{i=1}^{\ell} e_i)\setminus \{{{i_0}}\}$.
Since $\mathcal{M}$ is a $(\mathcal{S},r)$-CFF($t,n$), Equation (\ref{eqn:vcff}) guarantees there exists a row $w$ in $\mathcal{M}$ that tests item ${i_0}$ and avoids all other defective items.
If item ${{i_0}}$ is non-defective, this row will be a passing test, $y_w=0$, and $x_{i_0}$ will be set to $0$ in the first loop.
Otherwise, item ${i_0}$ is defective, and $x_{i_0}$ will remain equal to  $1$ at the end of the first loop.
In addition, row $w$ will prove that the condition on the second loop is false for $i=i_0$ so $x_{i_0}$ will never be set to 0.5.
Therefore, the output will be a Boolean $x$ that correctly informs the status of the items. \qed
\end{proof}

We are also interested in identifying infected edges when the output of Algorithm~\ref{alg:general} is not Boolean, which can happen if defective items are spread over too many edges (defective covers have size $>r$).  For example, in schools the tests may not provide full information on infected students, but we still may extract information on which classrooms are infected.
The following algorithm provides edge information based on ternary vertex information for a hypergraph $\cal{H}$.

\begin{algorithm}[h]
\caption{Edge information from vertices}\label{alg:edgeinfo}
\begin{algorithmic}
\item \textbf{Input:} Hypergraph $\mathcal{H}=(V,E)$ with $n$ vertices and $m$ edges;
Group testing matrix $\cal{M}$, boolean results $y=(y_1,y_2,\ldots,y_t)$;
Vector $x = (x_1, x_2, \ldots, x_n)$, $x_j = 1,0, 0.5$ if the j-th item is defective,  non-defective, unknown, respectively.
\item \textbf{Output:} Vector $z = (z_1, z_2, \ldots, z_m)$, $z_e = 1,0, 0.5$ if the e-th edge is defective,  nondefective, unknown, respectively.

\For{$s = 1, \ldots, m$} \Comment{this loop gets edge status from vertices}
    \State $z_s\gets  0$; 
    \For{each vertex $v_i$ in edge $e_s$}
        \If{$x_i=1$}
          \ $z_s\gets 1$
        \Else
          \If{$x_i=0.5$ and $z_s=0$}
            \ $z_s\gets 0.5$
          \EndIf
        \EndIf
    \EndFor
\EndFor
\For{$i=1,\ldots,t$} \Comment{this loop gets edge status from test results}
   \If{$y_i=1$} 
       \State{$E=\{j: M_{i,j}=1\ \mathrm{and}\ x_j \not=0\}$ }
       \For{$s = 1, \ldots, m$}
          \If{($z_s=0.5$) and ($E\subseteq e_s$)} \ $z_s\gets 1$
          \EndIf
        \EndFor  
   \EndIf
\EndFor
\State \Return $z$
\end{algorithmic}
\end{algorithm}

Some CFFs may have a value of $r$ for vertex status identification but have a larger value $r$ for edge status identification. This can be useful for applications, in that infected communities are identifiable even though we do not have perfect individual identification. To capture this property, we define edge-identifying CFFS (ECFFs), which has a weaker coverage requirement than SCFFs.

\begin{definition}[Edge-identifying CFFs]\label{edgeVCFF}
Let $r$, $t$, $n$, $\mathcal{M}$, $\mathcal{H}$ and $\mathcal{F}_{\mathcal{M}}$ be as in Definition~\ref{variabledef}.
 We say $\mathcal{M}$ is an $(\mathcal{S},r)$-ECFF($t,n$) if for any $\ell$-subset of hyperedges $\{S_1,\ldots,S_\ell\} \subseteq \mathcal{S}$, $\ell\leq r$,  and any ${i_0}\notin S=\cup_{j=1}^{\ell} S_j$, we have
	\begin{equation} \label{eq:ecff}
	\bigg|B_{i_0} \Big\backslash \bigg(\bigcup_{i \in S}B_{i}\bigg)\bigg| \geq 1. 
	\end{equation}	
\end{definition}

\begin{proposition}
Let $\mathcal{H} = ([1,n], \mathcal{S})$ be a hypergraph, $\mathcal{M}$ be a $(\mathcal{S},r)$-ECFF($t,n$), and $y$ be the test results for $\mathcal{M}$.
Let $x$ be the output of Algorithm~\ref{alg:general} for inputs $(\mathcal{H},\mathcal{M},$y$)$.
Then, if $\mathcal{H}$ has a defect cover with at most $r$ edges then Algorithm~\ref{alg:edgeinfo} applied to $(\mathcal{H}, \mathcal{M}, $x$, $y$)$ returns an output $z$  such that $\{S_j\in \mathcal{S}: z_j=1\}$ forms a defect cover.
\end{proposition}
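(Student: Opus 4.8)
The plan is to pass to a \emph{minimal} defect cover and then, edge by edge, use Definition~\ref{edgeVCFF} to exhibit one failing test whose ``suspect'' columns all lie inside that edge, which is exactly the situation that forces Algorithm~\ref{alg:edgeinfo} to output $1$ on the edge.

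First I would record the relevant facts about the vector $x$ returned by Algorithm~\ref{alg:general}, valid for any binary matrix under the standing assumption of correct test outcomes: a coordinate $x_j$ is set to $0$ only in the first loop and only when item $j$ lies in a passing test, and a passing test contains no defective; since the second loop never creates new $0$'s, every defective item $j$ therefore ends with $x_j\ne0$. Write $D$ for the defective set, $T=\{j:x_j\ne0\}$, and let $\mathcal{DC}$ be the promised cover with $|\mathcal{DC}|\le r$. If $D=\emptyset$ the claim is trivial, so assume $D\ne\emptyset$ and pass to a minimal sub-cover $\mathcal{DC}'=\{e_1,\dots,e_k\}\subseteq\mathcal{DC}$, still with $k\le r$ and $k\ge1$; minimality gives, for each $j$, a vertex $v_j\in D$ that lies in no $e_{j'}$ with $j'\ne j$, and since $\mathcal{DC}'$ covers $D$ we must have $v_j\in e_j$. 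Applying~(\ref{eq:ecff}) to the $k\le r$ edges $e_1,\dots,e_k$ and any $i_0\notin\bigcup_{j} e_j$ yields a row of $\mathcal{M}$ that tests $i_0$ and avoids $\bigcup_j e_j\supseteq D$, i.e.\ a passing test, so $x_{i_0}=0$; hence $T\subseteq\bigcup_{j=1}^{k} e_j$.

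The core step is to prove $z_{e_j}=1$ (final value) for every $j$. Since $v_j\in e_j$ and $x_{v_j}\ne0$, the first loop of Algorithm~\ref{alg:edgeinfo} already gives $z_{e_j}\in\{0.5,1\}$, so it suffices to show a $0.5$ cannot persist. I would apply~(\ref{eq:ecff}) to the edge set $\{e_{j'}:j'\ne j\}$ (an $\ell$-subset with $\ell=k-1\le r$, where $\ell=0$ is allowed and then merely asserts that column $B_{v_j}$ is nonempty) together with $i_0=v_j\notin\bigcup_{j'\ne j}e_{j'}$: this produces a row $w$ with $\mathcal{M}_{w,v_j}=1$ and $\mathcal{M}_{w,i}=0$ for all $i\in\bigcup_{j'\ne j}e_{j'}$. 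Test $w$ fails because it contains the defective $v_j$, and the set computed for it in the second loop is $E=\{i:\mathcal{M}_{w,i}=1\}\cap T$; since the support of row $w$ avoids $\bigcup_{j'\ne j}e_{j'}$ while $T\subseteq\bigcup_{j'=1}^{k}e_{j'}$, we get $E\subseteq\bigl(\bigcup_{j'=1}^{k}e_{j'}\bigr)\setminus\bigl(\bigcup_{j'\ne j}e_{j'}\bigr)\subseteq e_j$. Therefore, when the second loop reaches test $w$, the guard ``$z_{e_j}=0.5$ and $E\subseteq e_j$'' fires (unless $z_{e_j}$ is already $1$), setting $z_{e_j}\gets1$; no later iteration lowers it.

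To finish I would observe that $\{S_s\in\mathcal{S}:z_s=1\}$ contains $\mathcal{DC}'$, whose union already contains $D$, so it is a defect cover. I expect the only genuinely non-routine point to be the core step, and within it the key insight: one must separate the \emph{exclusive} defective $v_j$ from the \emph{other} cover edges (not just from $D$), so that the failing test it yields has all its uncertain columns inside the single edge $e_j$ — precisely the condition $E\subseteq e_s$ tested by Algorithm~\ref{alg:edgeinfo} — and it is minimality of $\mathcal{DC}'$ that supplies such a $v_j$. The degenerate cases ($D=\emptyset$, and $k=1$ where~(\ref{eq:ecff}) is invoked with $\ell=0$) should be checked explicitly but present no difficulty.
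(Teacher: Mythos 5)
Your proof is correct and follows essentially the same route as the paper's: pass to a minimal defect cover, use minimality to extract for each cover edge a defective vertex exclusive to it, and apply Equation~(\ref{eq:ecff}) with that vertex as $i_0$ against the remaining cover edges to produce a failing test whose surviving columns lie inside the edge, which triggers $z_e\gets 1$ in Algorithm~\ref{alg:edgeinfo}. The only differences are presentational — you make explicit a few points the paper leaves implicit (that $T=\{j:x_j\neq 0\}$ is contained in the union of the cover, that the exhibited test is indeed a failing one, and the degenerate cases) — so no further changes are needed.
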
 
\begin{proof} Let $\mathcal{IC}=\{S_{e_1},\ldots,S_{e_{\ell}}\}$ be any minimal defect cover with $\ell\leq r$ and let $C=\cup_{i=1}^{\ell} S_{e_i}$.
Then, any item ${i_0}\not\in C$ is non-defective and Equation (\ref{eq:ecff}) guarantees there is a row $w$ that tests ${i_0}$ and avoids all items in $C$, and thus avoids all defective items, which means $y_w=0$ and Algorithm 1 sets $x_{{i_0}}=0$.
Now, consider any edge $S_e \in \mathcal{IC}$ and let $S=U_{X\in \mathcal{IC}\setminus\{S_e\}} X$. Since $\mathcal{IC}$ is minimal, $S_e$ must contain a defective item $u \in S_e\setminus S$.
By Equation (\ref{eq:ecff}), using $i_0=u$, there must be a test/row $w$ that contains $u$ and avoids $S$.  
Thus, we must have $\{j: M_{w,j}=1\ \mathrm{and}\ x_j \not=0\} \subseteq S_e$, which implies Algorithm~\ref{alg:edgeinfo} sets $z_e=1$.
Therefore, $z_{j} =1$ for all ${S_j}\in \mathcal{IC}$ and possibly for a few other edges. Since every superset of an defect cover is a defect cover 
$\{S_j\in \mathcal{S}: z_j=1\}$ is a defect cover.
\qed
\end{proof}

For any CFF, structure-aware CFF, or ECFF matrix $\mathcal{M}$ we denote by $L_{\mathcal{M}}$
the number of ones in each row  of ${\mathcal M}$.
We keep track of these quantity in some constructions, since we may have limit $L_{max}$ on the number of ones per row, in cases where combining too many samples can result on a false negative.

\section{Structure-aware CFFs: non-overlapping edges}
\label{sec:disjoint}

We revisit old CFF constructions and show we can boost the number of defectives it can identify by taking a suitable hypergraph structure into account. We also propose some new constructions. Here we consider the case of non-overlapping edges, meaning that items do not participate in more than one edge. 

\subsection{Sperner-type constructions for $r=1$}\label{sec:sperner}

A Sperner set system is a set system where no set is contained in any other set in the set system. Sperner's theorem states that the largest Sperner set system on an $t$-set is formed by taking all subsets of cardinality $\lfloor t/2\rfloor$.
Given $n$, a $1$-CFF$(t,n)$ with minimum $t$ is obtained from Sperner  theorem by taking $t=\min\{s: {s \choose {\lfloor s/2\rfloor}} \leq n\}$ and the corresponding matrix having the characteristic vectors of $\lfloor t/2\rfloor$-subsets as columns.
We note that $t\sim \log n$ and this is the best possible, since being 1-CFF is equivalent to being Sperner.

A Sperner set system with sets with cardinality $a<t/2$ can be used as a $1$-CFF if
${t-1 \choose {\lfloor t/2\rfloor}-1}$ exceeds a maximum allowed number of ones per row, $L_{max}$. For nonoverlapping hypergraphs and $r=1$, we give constructions for SCFF for both unlimited and limited $L_{\mathcal{M}}$.

\begin{proposition}[$r=1$, unlimited $L_{\mathcal{M}}$]\label{prop:spernerunlimited}
Let $\mathcal{H}=([1,n],\mathcal{S})$ be a hypergraph with $m$ disjoint edges of cardinality at most $d$ that span $[1,n]$.
Let $\mathcal{M}$ be the vertical concatenation of matrices $M_1$ and $M_2$.
Let $M_1$ be obtained from a $1$-CFF$(t_1,m)$ matrix $A$ with  $t_1=\min\{s: {s \choose {\lfloor s/2\rfloor}} \leq m\}$ in such a way that if vertex $v_i$ is incident to edge $b_j$ column $i$ of $M_1$ repeats column $j$ of $A$. Let $M_2$ be a $d \times n$ matrix with an identity matrix of dimension up to $|S|$ pasted under the items of each edge $S\in\mathcal{S}$.
 Then, $M_1$ is an $(\mathcal{S},1)$-ECFF$(t_1,n)$ and $\mathcal{M}$ is an $(\mathcal{S},1)$-CFF$(t_1+d,n)$.
\end{proposition}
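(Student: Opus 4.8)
The plan is to verify the defining inequalities of Definitions~\ref{edgeVCFF} and~\ref{variabledef} directly, reading the witnessing rows off the two blocks of $\mathcal{M}$ and using that the edges of $\mathcal{H}$ partition $[1,n]$. Before splitting into cases I would record two structural facts. First, if vertices $i,i'$ lie in the same edge then columns $i$ and $i'$ of $M_1$ are equal (both repeat the column of $A$ indexed by that edge); consequently, since $A$ is a $1$-CFF, if $i$ and $i'$ lie in different edges then the $M_1$-column of $i$ is not contained in that of $i'$, and every column of $M_1$ is nonempty. Second, within the columns of any fixed edge $S$, the matrix $M_2$ contains an $|S|\times|S|$ identity submatrix, so every vertex of $S$ ``owns'' a row of $M_2$ that has a $1$ in its own column and a $0$ in the columns of all the other vertices of $S$ (this row placement exists because $|S|\le d$).

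First I would show $M_1$ is an $(\mathcal{S},1)$-ECFF$(t_1,n)$. The case $\ell=0$ of (\ref{eq:ecff}) just says the columns of $M_1$ are nonempty, which is part of the first fact. For $\ell=1$, given $S_1\in\mathcal{S}$ and $i_0\notin S_1$, the edge of $i_0$ differs from $S_1$, so the first fact yields a row of $M_1$ that contains $i_0$ and no vertex of $S_1$; as a row of $M_1$ it witnesses (\ref{eq:ecff}).

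Next I would show $\mathcal{M}$ is an $(\mathcal{S},1)$-CFF$(t_1+d,n)$: for a single edge $S_1$, a set $I\subseteq S_1$, and $i_0\in[1,n]\setminus I$, I must produce a row of $\mathcal{M}$ that lies in $B_{i_0}$ but in no $B_i$ with $i\in I$. If $i_0\notin S_1$, the row found in the ECFF step already works, since it avoids all of $S_1\supseteq I$. If $i_0\in S_1$, then $i_0\in S_1\setminus I$; here the $M_1$-block is useless because $i_0$ and every $i\in I$ share the same $M_1$-column (first fact), so I would instead invoke the second fact: the $M_2$-row owned by $i_0$ within $S_1$ is in $B_{i_0}$ and, since $I\subseteq S_1\setminus\{i_0\}$, in none of the $B_i$, $i\in I$. (The sub-case $I=\emptyset$ is immediate since all columns are nonempty.) Finally I would record dimensions: $M_1$ has $t_1$ rows, $M_2$ has $d$ rows, so $\mathcal{M}$ is a well-formed $(t_1+d)\times n$ matrix, giving (\ref{eqn:vcff}) for $r=1$.

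The computations here are all one-liners; the one conceptual point that needs care is recognizing \emph{why} the plain Sperner block $M_1$ is merely an ECFF and not an SCFF --- namely, two vertices of the same edge are indistinguishable in $M_1$ --- and that the $d$ rows of $M_2$ exist precisely to restore within-edge distinguishability via the pasted identities, which is exactly what upgrades the guarantee from ``identify the infected edge'' to ``identify every infected vertex'' when all defectives lie in a single edge.
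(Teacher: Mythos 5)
Your proof is correct and follows essentially the same route the paper takes (its argument for this construction appears in the deferred appendix and, in generalized form, in the proof of Theorem~\ref{hyperpacking}): the case $i_0\notin S_1$ is handled by the Sperner block $M_1$ via the $1$-CFF property on edge-indices, and the case $i_0\in S_1$ by the pasted identity rows of $M_2$, with the observation that $M_1$ alone only distinguishes edges (hence ECFF) being exactly the intended point. No gaps.
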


For uniform hypergraphs the construction above gives $t\sim \log m + d = \log n/d +d$, but does not limit $L_{\mathcal{M}}$. The next proposition is useful for limited $L_{\mathcal{M}}$, as shown in the example that follows it.

\begin{proposition}[$r=1$, $L_{\mathcal{M}}\leq L_{max}$]\label{prop:spernerlimited}
Let $L_{max}$ be a positive integer that limits the number of 1s in each row of the CFF.
Let $\mathcal{H}=([1,n],\mathcal{S})$ be a hypergraph with $m$ disjoint edges of cardinality at most $d$ that span $[1,n]$, where $d\leq L_{max}$.
Let $t_1=\min\{s: {s \choose {\lfloor s/2\rfloor}} \leq m\}$. Then,
\begin{enumerate}
\item If $ d \times{t_1-1 \choose \lfloor t_1/2\rfloor -1}\leq L_{max}$ and $m \leq L_{max}$ then 
$\mathcal{M}$ given in Proposition~\ref{prop:spernerunlimited} is an $(\mathcal{S},1)$-CFF$(t_1+d,n)$ with $L_{\mathcal{M}}\leq L_{max}$.
\item Otherwise, let $q=\lceil m/L_{max} \rceil$. 
Take $t,a$ such that ${t \choose a} \geq m$ and $d\times {t-1 \choose \lfloor a\rfloor -1} \leq L_{max}$.
Then, there exists a $(\mathcal{S},1)$-CFF$(t+qd,n)$ matrix $\mathcal{M}$ with $L_{\mathcal{M}}\leq L_{max}$.
\end{enumerate}

\end{proposition}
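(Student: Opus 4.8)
The plan is to build $\mathcal{M}$ by concatenating an ``edge-identifying'' block with a ``within-edge-identifying'' block, exactly as in Proposition~\ref{prop:spernerunlimited}, but now controlling the row weights. For part~(1), I would simply re-examine the matrix $\mathcal{M}$ from Proposition~\ref{prop:spernerunlimited} and bound its row weights. Recall $\mathcal{M}$ is the vertical concatenation of $M_1$ and $M_2$, where $M_1$ comes from a $1$-CFF$(t_1,m)$ (a Sperner system on a $t_1$-set with blocks of size $\lfloor t_1/2\rfloor$) by repeating, for each vertex $v_i$ in edge $b_j$, the $j$-th column of $A$; and $M_2$ pastes an identity of size $|S|$ under the vertices of each edge $S$. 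A row of $M_1$ corresponds to a point of the Sperner ground set; it contains a $1$ in the columns of all vertices lying in the $\binom{t_1-1}{\lfloor t_1/2\rfloor-1}$ edges whose block contains that point, and since each such edge contributes at most $d$ vertices, the weight is at most $d\binom{t_1-1}{\lfloor t_1/2\rfloor-1}$. A row of $M_2$ has at most one $1$ per edge, for a total of at most $m$ ones. Hence under the two stated hypotheses every row has weight at most $L_{max}$, and correctness as an $(\mathcal{S},1)$-CFF$(t_1+d,n)$ is inherited verbatim from Proposition~\ref{prop:spernerunlimited}.

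For part~(2), the obstruction is precisely that the $M_2$-rows may be too heavy ($m > L_{max}$) and/or the $M_1$-rows too heavy. The fix for $M_2$ is to split the $m$ edges into $q=\lceil m/L_{max}\rceil$ groups of at most $L_{max}$ edges each, and give each group its own identity-type block of $d$ rows; a vertex in a given edge gets its identity coordinate only in that edge's group-block and zeros in the other $q-1$ group-blocks. This replaces the single $d\times n$ block $M_2$ by a $qd\times n$ block whose rows each have at most $L_{max}$ ones (each row touches at most $L_{max}$ edges, at most one vertex per edge). The fix for $M_1$ is to drop the optimal Sperner parameters $(t_1,\lfloor t_1/2\rfloor)$ and instead take \emph{any} $(t,a)$ with $\binom{t}{a}\ge m$: the resulting Sperner system with blocks of size $a$ is still a $1$-CFF$(t,m)$ (no $a$-set contains another $a$-set), and the corresponding $M_1$-rows now have weight at most $d\binom{t-1}{a-1}\le L_{max}$ by the same counting as in part~(1). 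Such $(t,a)$ exist because one may always take $a$ small (even $a=1$, giving $t=m$) — the point is that shrinking $a$ shrinks $\binom{t-1}{a-1}$ and hence the row weight, at the cost of increasing $t$.

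It remains to check the CFF property of the new $\mathcal{M}=[M_1;M_2']$ where $M_2'$ is the $qd\times n$ block. Since $r=1$, we must verify for every edge $S$, every $I\subseteq S$, and every $i_0\notin I$ that $B_{i_0}\setminus\bigcup_{i\in I}B_i\ne\emptyset$. If $i_0$ lies outside the edge $S_{j}$ containing some of $I$ — more precisely if $i_0$ is in a different edge than all of $I$, or $I=\emptyset$ — then the $M_1$-part already separates $i_0$ from $I$ because $M_1$ is a $1$-CFF on the edge-level (two vertices in distinct edges have distinct, mutually non-covering $M_1$-columns, and a nonempty column is never covered by the empty union); here I use that $A$ is a genuine $1$-CFF$(t,m)$. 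If instead $i_0$ and all of $I$ lie in the same edge $S_j$, then within that edge's group-block the identity submatrix gives a row with a $1$ in column $i_0$ and $0$ in every column of $I\subseteq S_j\setminus\{i_0\}$, so that row witnesses $B_{i_0}\setminus\bigcup_{i\in I}B_i\ne\emptyset$. (The case $i_0$ in the same edge as part of $I$ but $I$ also meeting other edges reduces to the first case using the $M_1$-part, since $I$ then contains a vertex in an edge different from $i_0$'s — actually one should argue directly: it suffices that \emph{some} coordinate of $B_{i_0}$ avoids $\bigcup_{i\in I}B_i$, and the within-edge identity coordinate of $i_0$ is $1$ only for $i_0$ among vertices of its own edge, while $M_1$ handles vertices of $I$ outside that edge when none are present the identity coordinate suffices.) The main thing to get right is this case analysis and the bookkeeping that each row of each block has weight $\le L_{max}$; the counting $d\binom{t-1}{a-1}$ for $M_1$-rows and $\le L_{max}$ for $M_2'$-rows is the crux, and everything else is a transcription of Proposition~\ref{prop:spernerunlimited}. \qed
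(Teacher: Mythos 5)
Your proposal is correct and follows essentially the same route as the paper: part (1) is obtained by bounding the row weights of the matrix from Proposition~\ref{prop:spernerunlimited} (at most $d\binom{t_1-1}{\lfloor t_1/2\rfloor-1}$ for the Sperner block and at most $m$ for the identity block), and part (2) by replacing the half-size Sperner system with an $a$-subset Sperner system and splitting the identity rows into $q$ groups of at most $L_{max}$ edges each. The paper's proof is a terser statement of the same construction; your extra verification of the cover-free property for the split matrix and your observation that admissible $(t,a)$ always exist (e.g.\ $a=1$, $t=m$, since $d\leq L_{max}$) only fill in details the paper leaves implicit.
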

\begin{proof}
The first statement comes from Proposition~\ref{prop:spernerunlimited}. The second statement comes from vertically concatenating $N_1$ and $N_2$ where $N_1$ is formed by a Sperner system of
$a$-subsets of a $t$-set on the $m$ edges (repeating columns for vertices in the same edges) and $N_2$ is build similarly to $M_2$ in Proposition~\ref{prop:spernerunlimited} but splitting rows (the ones in each row are split into up to $q$ new rows not exceeding $L_{max}$). \qed
\end{proof}

\begin{example}[Proposition~\ref{prop:spernerlimited} used for $m$ classrooms with $d$ students each]
Suppose $n$ students are divided into $m$ classrooms of size up to $d$.
Then Proposition~\ref{prop:spernerlimited} can be used to identify all infected students, provided they are all in a single classroom ($r=1$).
The table below reports on number of tests for each scenario depending on value of $L_{max}$ for the construction on Proposition~\ref{prop:spernerlimited}.
The line with $L=\infty$ shows the number of tests for the construction for unlimited $L$ (Proposition~\ref{prop:spernerunlimited}).
The last line shows the lower bound given in~\cite{furedi} for the number of rows $t$ on a $d$-CFF$(t,n)$ required for location of any set of $d$ infected students, not necessarily concentrated on a single classroom.

\ \\
\tiny
\begin{tabular}{c||c|c|c|c||c|c|c|c||c|c|c|c||c|c|c|c||c|c|c|c||c|c|c|c||c|c|c|c}
$n/100$ 
     &  .5 & 1 &  2 &  3 &  1 & 2  &  4 & 6  & 1.5& 3 &  6  & 9  & 2 & 4   &  8 & 12 & 2.5& 5  & 10 & 15 & 3 & 6  & 12 & 18\\
$m$ &   10& 10 & 10 & 10 & 20 & 20 & 20 & 20 & 30 & 30 & 30 & 30 & 40 & 40 & 40 & 40 & 50 & 50 & 50 & 50 & 60 & 60 & 60 & 60\\
$d$  &  5 & 10 & 20 & 30 & 5  & 10 & 20 & 30 & 5  & 10 & 20 & 30 & 5  & 10 & 20 & 30 & 5  & 10 & 20 & 30 & 5  & 10 & 20 & 30 \\ \hline
$L=$ \\ \cline{2-25}
10 & 11 & 16 & 26 & 36 & 17 & 27 & 47 & 67 & 23 & 38 & 68 & 98 & 28 & 48 & 88 & 128& 33 & 58 &108 &158 & 39 & 69 &129 &189\\
15 & 11 & 16 & 26 & 36 & 17 & 27 & 47 & 67 & 18 & 28 & 48 & 68 & 23 & 38 & 68 & 98 & 28 & 48 & 88 &128 & 29 & 49 & 89 &129\\
20 & 11 & 16 & 26 & 36 & 12 & 17 & 27 & 37 & 18 & 28 & 48 & 68 & 18 & 28 & 48 & 68 & 23 & 38 & 68 & 98 & 24 & 39 & 69 &99\\
25 & 10 & 16 & 26 & 36 & 12 & 17 & 27 & 37 & 18 & 28 & 48 & 68 & 18 & 28 & 48 & 68 & 18 & 28 & 48 & 68 & 24 & 39 & 69 &99\\
30 & 10 & 16 & 26 & 36 & 12 & 17 & 27 & 37 & 14 & 18 & 28 & 38 & 18 & 28 & 48 & 68 & 18 & 28 & 48 & 68 & 19 & 29 & 49 &69\\ \hline
$L=\infty$ & 10 & 15 & 25 & 35 & 11 & 16 & 26 & 36 & 12 & 17 & 27 & 37 & 13 & 18 & 28 & 38 & 13 & 18 & 28 & 38 & 13 & 18 & 28 &38\\ \hline
{\tiny $t(d,n)>$} & 21 & 66 & 180 & 270 & 21 & 66 & 231 & 496 & 21 & 66 & 231 & 496 & 21 & 66 & 231 & 496 &  23& 66 & 231 & 496 & 25 & 66 & 231 & 496\\
\hline
\end{tabular}
\end{example}

\subsection{Kronecker product constructions (general $r$)}

Let $A_k$ be an $m_k \times n_k$ binary matrix, for $k=1,2$, and \textbf{0} be the matrix of all zeroes with same dimension as $A_2$. The Kronecker product $P = A_1 \otimes A_2$ is a binary matrix formed of blocks $P_{i,j}$ such that $P_{i,j} = A_2$ if $A_{1_{i,j}} = 1$ and $P_{i,j} = \textbf{0}$, otherwise. 
We denote by $R_k$ the row matrix with $k$ ones and by $I_k$ the identity matrix of dimension $k$.
The propositions given after each theorem specializes the theorem construction and generalizes to SCFF, boosting the defective detection.

\begin{theorem}[Li et al.~\cite{monotone} for $d=2$, Idalino and Moura~\cite{TCS}]\label{dcff}
	Let $A_1$ be a $d$-CFF$(t_1, n_1)$ and $A_2$ be a $d$-CFF$(t_2, n_2)$, then $C = A_1 \otimes A_2$ is a $d$-CFF$(t_1t_2, n_1 n_2)$.
\end{theorem}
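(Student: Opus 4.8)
The plan is to prove directly from the matrix definition of $d$-CFF (Definition~\ref{defCFFmatrix}). We need to show that any $d+1$ columns of the Kronecker product $C = A_1 \otimes A_2$ contain, among their rows, a permutation matrix of order $d+1$; equivalently, for each of the $d+1$ chosen columns, there is a row in which that column has a $1$ and all the other $d$ chosen columns have a $0$.

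First I would set up notation: index a column of $C$ by a pair $(a,b)$ with $a\in[1,n_1]$, $b\in[1,n_2]$, so that the entry of $C$ in ``block-row'' $p$ and ``block-column'' $q$ at inner position $(u,v)$ equals $A_{1_{p,a}}\cdot A_{2_{u,b}}$ — i.e. it is $1$ iff $A_{1_{p,a}}=1$ and $A_{2_{u,b}}=1$. Now fix $d+1$ distinct columns $(a_0,b_0),(a_1,b_1),\ldots,(a_d,b_d)$ and fix a distinguished one, say $(a_0,b_0)$; I must produce a row of $C$ isolating it from the other $d$.

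The key step is a case split on the first coordinates. Let $J=\{j\in[1,d] : a_j = a_0\}$ be the set of the other chosen columns sharing the first coordinate $a_0$. For $j\in J$ we necessarily have $b_j\neq b_0$ (the columns are distinct), so $\{b_j : j\in J\}$ together with $b_0$ is a set of at most $d+1$ columns of $A_2$; since $|J|\le d$, applying the $d$-CFF property of $A_2$ to these at most $|J|+1\le d+1$ columns gives a row $u^\ast$ of $A_2$ with $A_{2_{u^\ast,b_0}}=1$ and $A_{2_{u^\ast,b_j}}=0$ for all $j\in J$. Meanwhile, the first coordinates $a_0$ and $\{a_j : j\in[1,d]\setminus J\}$ are a set of at most $d+1$ distinct columns of $A_1$ (distinct because $a_j\neq a_0$ for $j\notin J$, though the $a_j$ for different $j\notin J$ need not be distinct — so really it is a set of at most $d+1$ values, and we apply the CFF property to $a_0$ together with those values, at most $d+1$ columns), so the $d$-CFF property of $A_1$ yields a row $p^\ast$ with $A_{1_{p^\ast,a_0}}=1$ and $A_{1_{p^\ast,a_j}}=0$ for every $j\notin J$. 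Then the row of $C$ in block-row $p^\ast$, inner position $u^\ast$, has a $1$ in column $(a_0,b_0)$ (since both factors are $1$); for $j\in J$ it has $A_{1_{p^\ast,a_0}}A_{2_{u^\ast,b_j}} = 1\cdot 0 = 0$, and for $j\notin J$ it has $A_{1_{p^\ast,a_j}}A_{2_{u^\ast,b_j}} = 0\cdot(\cdot) = 0$. So this row isolates $(a_0,b_0)$, as required. Running this over all $d+1$ choices of the distinguished column produces the permutation matrix.

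The main obstacle — really the only subtlety — is the bookkeeping that the sets of columns fed into the CFF properties of $A_1$ and $A_2$ each have size at most $d+1$, even though they are defined as $\{a_0\}\cup\{a_j\}$ and $\{b_0\}\cup\{b_j : j\in J\}$ where the $a_j$ (resp.\ $b_j$) may coincide; one must note that the $d$-CFF property applies verbatim to any multiset of at most $d+1$ columns (duplicates only make isolation easier, or one simply passes to the underlying set of size $\le d+1$ and pads). Once that is observed, dimension counting gives $t_1t_2$ rows and $n_1n_2$ columns, completing the proof.
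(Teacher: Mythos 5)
Your proof is correct and follows the standard argument for this result (which the paper cites from Li et al.\ and Idalino--Moura rather than reproving): split the $d+1$ columns of $A_1\otimes A_2$ according to whether they share the distinguished column's block index, isolate within the block using the CFF property of $A_2$, and isolate across blocks using that of $A_1$. Your attention to the multiset/padding issue when fewer than $d+1$ distinct indices arise is exactly the right bookkeeping, so nothing is missing.
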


\begin{proposition} \label{prop:kronecker}
Let ${\cal H}=([1,n],\mathcal{S})$ be a hypergraph formed by $m$ disjoint edges of cardinality $k$, $n=k\times m$.
Let $r$ be a positive integer, and let $A$ be an $r$-CFF$(t,m)$.
Then $A\otimes R_k$ is an $(\mathcal{S},r)$-ECFF$(t,km)$ and $A\otimes I_k$ is an $(\mathcal{S},r)$-CFF$(kt,km)$.
\end{proposition}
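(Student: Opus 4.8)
The plan is to exploit the block structure of the Kronecker product so that the hypergraph's edge partition lines up exactly with the column blocks, and then to reduce both claims to the ordinary $r$-CFF property of $A$. First I would fix the indexing: assume the edges are $e_j=\{(j-1)k+1,\dots,jk\}$ for $j=1,\dots,m$, so that column block $j$ of either Kronecker product consists precisely of the columns indexed by the vertices of $e_j$. Writing $B^A_1,\dots,B^A_m$ for the sets of the set system associated to $A$, a direct inspection of the construction gives the two facts I use throughout: (i) in $A\otimes R_k$ every vertex of $e_j$ has column set equal to $B^A_j$ (column $j$ of $A$ repeated $k$ times); and (ii) in $A\otimes I_k$ the vertex sitting at position $\ell\in[1,k]$ inside $e_j$ has column set $\{(i-1)k+\ell : A_{i,j}=1\}$, i.e.\ a copy of column $j$ of $A$ pushed into the $\ell$-th ``slot'' of each row block.

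For the ECFF claim on $A\otimes R_k$, I would simply translate (\ref{eq:ecff}) through fact (i). Given $\ell\le r$ edges with indices $j_1,\dots,j_\ell$ and a vertex $i_0$ outside their union, $i_0$ lies in some edge $e_{j_0}$ with $j_0\notin\{j_1,\dots,j_\ell\}$, so $B_{i_0}=B^A_{j_0}$ and $\bigcup_{i\in S}B_i=\bigcup_{p=1}^{\ell}B^A_{j_p}$. Since $j_0$ is distinct from $j_1,\dots,j_\ell$ and $\ell\le r$, the $r$-CFF property of $A$ yields $|B^A_{j_0}\setminus\bigcup_{p=1}^{\ell}B^A_{j_p}|\ge 1$, which is exactly (\ref{eq:ecff}). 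Nothing deeper is needed here.

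For the CFF claim on $A\otimes I_k$, the key idea is that fixing a single row slot decouples the problem. Take $\le r$ edges with indices $j_1,\dots,j_r$, a set $I$ of vertices contained in their union, and $i_0\notin I$; write $i_0$ as the vertex at position $\ell_0$ of some edge $e_{j^*}$. I would only look at rows of the form $(i-1)k+\ell_0$. By fact (ii), such a row meets column $i_0$ iff $A_{i,j^*}=1$, and it meets a vertex $v\in I$ iff $v$ also sits at position $\ell_0$ in its edge and $A_{i,j(v)}=1$, where $j(v)$ is the edge index of $v$. Let $P'=\{j(v): v\in I,\ v\text{ at position }\ell_0\}$; since $I$ is spread over at most $r$ edges and there is at most one vertex per (edge, position) pair, $|P'|\le r$. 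Crucially $j^*\notin P'$: if $j^*$ is one of the $r$ chosen edges, the vertex at position $\ell_0$ of $e_{j^*}$ is $i_0$ itself, which is not in $I$; and if $j^*$ indexes a different edge this is immediate. Applying the $r$-CFF property of $A$ to column $j^*$ against the $\le r$ columns indexed by $P'$ produces a row $i^*$ with $A_{i^*,j^*}=1$ and $A_{i^*,j}=0$ for all $j\in P'$. Then row $(i^*-1)k+\ell_0$ of $A\otimes I_k$ lies in $B_{i_0}$ and in no $B_v$ with $v\in I$, establishing (\ref{eqn:vcff}). The dimension claims are immediate from the Kronecker product: $A\otimes R_k$ is $t\times km$ and $A\otimes I_k$ is $kt\times km$.

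I expect the only real care to be in the CFF part — specifically the observation that restricting attention to one row slot $\ell_0$ cuts the interference from $I$ down to at most $r$ vertices whose edge indices avoid $j^*$, so that the plain $r$-CFF property of $A$ (rather than some stronger property) suffices. The rest is routine bookkeeping with the Kronecker block structure, and the ECFF part is essentially a one-line reduction.
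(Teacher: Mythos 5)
Your proof is correct, and it follows essentially the same route as the paper: align the column blocks of the Kronecker product with the disjoint edges, reduce the ECFF condition for $A\otimes R_k$ directly to the $r$-CFF property of $A$ on edge indices, and for $A\otimes I_k$ restrict to the single row slot determined by $i_0$'s position within its edge so that at most $r$ columns of $A$ interfere (this is the same mechanism used in the paper's proofs of Theorems~\ref{hyperpacking} and~\ref{hyperpacking2}). The one step worth spelling out — that $j^*\notin P'$ because the position-$\ell_0$ vertex of $e_{j^*}$ is $i_0$ itself — you handle correctly.
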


\begin{theorem}[Li et al.~\cite{monotone} for $d=2$, Idalino and Moura~\cite{TCS}]\label{d-sum}
	Let $d \geq 2$, $A_1$ be a $d$-CFF$(t_1, n_1)$, $A_2$ be a $d$-CFF$(t_2, n_2)$, $B$ be a $(d-1)$-CFF$(s, n_2)$. 
	Let $C$ be the vertical concatenation of $B\otimes A_1$ with $A_2 \otimes R_{n_1}$.
	Then $C$ is a $d-$CFF$(st_1 + t_2, n_1 n_2)$.
\end{theorem}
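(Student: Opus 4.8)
The plan is to argue directly from the set‑system description of $C$ by a case analysis on the second coordinates of the chosen columns. Index the columns of $C$ by pairs $(i,j)$ with $i\in[1,n_1]$, $j\in[1,n_2]$, so that both halves of $C$ use the same column indexing. Write $\alpha_i\subseteq[1,t_1]$ for the support of column $i$ of $A_1$, $\beta_j\subseteq[1,s]$ for the support of column $j$ of $B$, and $\gamma_j\subseteq[1,t_2]$ for the support of column $j$ of $A_2$. Then, in the top block $B\otimes A_1$, column $(i,j)$ is supported exactly on $\beta_j\times\alpha_i$ (inside the $st_1$ rows viewed as $[1,s]\times[1,t_1]$): restricted to the block‑row indexed by $b\in[1,s]$, it is column $i$ of $A_1$ if $b\in\beta_j$ and is all zero otherwise. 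In the bottom block $A_2\otimes R_{n_1}$, the factor $R_{n_1}$ collapses the first coordinate, so column $(i,j)$ is supported exactly on $\gamma_j$, independently of $i$. Now fix $d+1$ distinct columns $(i_0,j_0),(i_1,j_1),\dots,(i_d,j_d)$; I must produce a row of $C$ with a $1$ in column $(i_0,j_0)$ and $0$ in the other $d$ columns. Let $k=|\{\ell\in[1,d]:j_\ell=j_0\}|$ and split on whether $k\ge 1$ or $k=0$.

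Suppose $k\ge 1$. Among the indices $\ell$ with $j_\ell\ne j_0$ there are at most $d-k\le d-1$ distinct values of $j_\ell$, none equal to $j_0$; since $B$ is a $(d-1)$-CFF, $\beta_{j_0}$ is not covered by that union, so pick $b^\ast\in\beta_{j_0}$ with $b^\ast\notin\beta_{j_\ell}$ for every $\ell$ with $j_\ell\ne j_0$. Look at block‑row $b^\ast$ of $B\otimes A_1$: there, column $(i_0,j_0)$ equals column $i_0$ of $A_1$; every column $(i_\ell,j_\ell)$ with $j_\ell\ne j_0$ is identically zero; and every column $(i_\ell,j_0)$ with $j_\ell=j_0$ equals column $i_\ell$ of $A_1$, where necessarily $i_\ell\ne i_0$. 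There are at most $k\le d$ of the latter, so since $A_1$ is a $d$-CFF there is $a^\ast\in\alpha_{i_0}$ with $a^\ast\notin\alpha_{i_\ell}$ for all those $\ell$. The row $(b^\ast,a^\ast)$ of $C$ then has a $1$ in column $(i_0,j_0)$ and $0$ in all the others, as required.

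Suppose instead $k=0$, i.e. $j_0\notin\{j_1,\dots,j_d\}$; then $\{j_1,\dots,j_d\}$ has at most $d$ distinct values and $A_2$ being a $d$-CFF gives $\ell^\ast\in\gamma_{j_0}$ with $\ell^\ast\notin\gamma_{j_\ell}$ for all $\ell\in[1,d]$. In the bottom block $A_2\otimes R_{n_1}$, the row indexed by $\ell^\ast$ has a $1$ in every column whose second coordinate is $j_0$ (in particular in $(i_0,j_0)$) and a $0$ in every column whose second coordinate is some $j_\ell\ne j_0$, which isolates $(i_0,j_0)$. This exhausts the cases, and counting rows shows $C$ has $st_1+t_2$ rows and $n_1n_2$ columns, so $C$ is a $d$-CFF$(st_1+t_2,\,n_1n_2)$.

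The one point that must be gotten exactly right — and the only thing I would call a genuine obstacle — is the choice of case split: it has to be on whether one of the $d$ competing columns already matches the second coordinate $j_0$. That is precisely the situation in which the weaker $(d-1)$-CFF $B$ still suffices, because one competitor has been ``used up'' matching $j_0$, leaving at most $d-1$ distinct second coordinates for $B$ to separate, while the within‑block competitors (of which there are at most $d$) are absorbed by the full $d$-CFF strength of $A_1$; in the complementary case all $\le d$ competitors differ from $j_0$ in the second coordinate and are knocked out in one shot by $A_2$. Everything else is routine bookkeeping with the Kronecker‑product index conventions.
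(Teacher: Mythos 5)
Your proof is correct and is exactly the standard argument for this construction (the paper itself states this theorem as a known result from Li et al.\ and Idalino--Moura without reproving it): the case split on whether some competing column shares the outer index $j_0$ is precisely what lets the $(d-1)$-CFF $B$ suffice in the top block while $A_2$ handles the complementary case in the bottom block. All index bookkeeping, the use of each CFF ingredient at its stated strength, and the row count $st_1+t_2$ check out.
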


\begin{proposition}\label{prop:kroneckerplus}
Let ${\cal H}=([1,n],\mathcal{S})$ be a hypergraph formed by $m$ disjoint edges of cardinality $k$, $n=k\times m$.
Let $r$ be a positive integer, $A$ be an $r$-CFF$(t_A,m)$, and $B$ be an $(r-1)$-CFF$(t_B,m)$.
Then the vertical concatenation of $A\otimes R_k$ with $B\otimes I_k$  is an $(\mathcal{S},r)$-CFF$(t_A+k t_B,km)$.
Moreover, if edges have different cardinalities bounded by $k$, a similar construction yields 
an $(\mathcal{S},r)$-CFF$(t_A+k t_B,n)$.
\end{proposition}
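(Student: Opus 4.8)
The plan is to prove Proposition~\ref{prop:kroneckerplus} by mimicking the structure of the proof of Theorem~\ref{d-sum}, but exploiting the fact that the hypergraph edges are disjoint to reduce the ``multiplicity'' demand on the CFF components from $d$ down to $r$ at the edge level. Let $\mathcal{M}$ be the vertical concatenation of $A\otimes R_k$ (the top block, with $t_A$ rows) and $B\otimes I_k$ (the bottom block, with $k\,t_B$ rows). Write each vertex of $[1,n]$ as a pair $(a,b)$ where $a\in[1,m]$ indexes the edge it belongs to and $b\in[1,k]$ indexes its position inside that edge; this identification is exactly what makes the Kronecker structure align with the hypergraph. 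The column of $\mathcal{M}$ indexed by $(a,b)$ is the concatenation of column $a$ of $A$ (the $R_k$ factor just copies it) with column $a$ of $B$ tensored with the $b$-th unit vector.

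First I would unwind Definition~\ref{variabledef} for this setting: we fix $r$ edges $S_1,\dots,S_r$ (indices $a_1,\dots,a_r\in[1,m]$), a subset $I\subseteq\bigcup_j S_j$, and a vertex $i_0=(a_0,b_0)\notin I$, and we must produce a row of $\mathcal{M}$ that hits column $i_0$ but no column in $I$. Split into two cases according to whether $a_0\in\{a_1,\dots,a_r\}$. In the easy case $a_0\notin\{a_1,\dots,a_r\}$: then the $r$ columns $a_1,\dots,a_r$ of $A$ together with column $a_0$ form $r+1$ distinct columns, so since $A$ is an $r$-CFF there is a row $\rho$ of $A$ with $A_{\rho,a_0}=1$ and $A_{\rho,a_j}=0$ for all $j$; the corresponding band of rows of $A\otimes R_k$ then covers every vertex of edge $a_0$ (in particular $i_0$) and no vertex of any edge $a_j$, hence none of $I$. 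In the harder case $a_0\in\{a_1,\dots,a_r\}$, say $a_0=a_1$: here we use the bottom block. The at most $r-1$ remaining edge indices among $\{a_2,\dots,a_r\}$ are distinct from $a_0$, so together with $a_0$ they are at most $r$ distinct columns of $B$; since $B$ is an $(r-1)$-CFF there is a row $\sigma$ of $B$ with $B_{\sigma,a_0}=1$ and $B_{\sigma,a_j}=0$ for $j\geq2$. In the block $B\otimes I_k$, the band corresponding to row $\sigma$ contains, for each position $b$, a row that hits exactly the vertices $(a,b)$ over all $a$ with $B_{\sigma,a}=1$. Pick $b=b_0$: this row hits $i_0=(a_0,b_0)$, it hits no vertex of edges $a_2,\dots,a_r$ at all, and within edge $a_0$ it hits only $(a_0,b_0)$, which is not in $I$. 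So this single row witnesses~\eqref{eqn:vcff}, completing the coverage argument; counting rows gives $t_A+k\,t_B$.

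For the final sentence (edges of differing cardinalities bounded by $k$), I would note that the only place cardinality was used is in the identity-block factor: for an edge of size $k'\leq k$ we paste $I_{k'}$ rather than $I_k$ beneath its vertices, exactly as in the $M_2$ construction of Proposition~\ref{prop:spernerunlimited}; the row count of the bottom block is then $t_B$ times the maximum edge size, i.e. at most $k\,t_B$, and padding shorter edges' blocks with zero rows (or simply letting rows be shorter) does not affect the two-case argument above, since we always chose the position $b_0$ that actually occurs in edge $a_0$. I expect the main obstacle to be bookkeeping in the harder case: making sure that restricting to a single position $b_0$ in the identity block genuinely kills every column of $I$ lying in the chosen ``repeated'' edge $a_0$ — this is where disjointness of edges is essential, because it guarantees a vertex belongs to exactly one edge and so its position coordinate is unambiguous, and it is the reason we only need $B$ to be an $(r-1)$-CFF rather than needing to control interactions between multiple vertices of the same edge. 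I would also double-check the boundary case $r=1$, where $B$ is a $0$-CFF (trivially any matrix with no repeated all-structure, effectively just requiring each column nonzero), and confirm the construction degenerates to Proposition~\ref{prop:kronecker}'s $A\otimes I_k$ with $A$ a $1$-CFF, consistent with Proposition~\ref{prop:spernerunlimited}.
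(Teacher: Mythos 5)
Your proof is correct and follows essentially the same approach the paper uses (explicitly in the proofs of Theorems~\ref{hyperpacking} and~\ref{hyperpacking2}, which generalize this construction): the top block $A\otimes R_k$ handles the case where the edge containing $i_0$ is not among the $r$ chosen edges via the $r$-CFF property of $A$, and the bottom block $B\otimes I_k$ handles the case where it is, using the $(r-1)$-CFF property of $B$ against the remaining $r-1$ edges together with the identity factor to isolate $i_0$ inside its own edge. One minor aside: for $r=1$ the construction degenerates to Proposition~\ref{prop:spernerunlimited}'s $M_1$ stacked on $M_2$ (with $B$ the all-ones row $R_m$), not to Proposition~\ref{prop:kronecker}'s $A\otimes I_k$, but this does not affect your argument.
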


\begin{figure}[h]
Construction in Proposition~\ref{prop:kronecker}, using $2$-CFF(9,12) $A$:\\

{\tiny
	\centering	\setcounter{MaxMatrixCols}{12}
	$A= \begin{pmatrix}
	1&  0 & 0 & 1 & 0 & 0&  1 & 0 & 0 & 1 & 0 & 0\\
	1 & 0 & 0 & 0 & 1 & 0 & 0 & 1 & 0 & 0 & 1 & 0 \\
	1 & 0 & 0 & 0 & 0 & 1 & 0 & 0 & 1 & 0 & 0 & 1 \\
	0 & 1 & 0 & 1 & 0 & 0 & 0 & 0 & 1 & 0 & 1 & 0 \\
	0 & 1 & 0 & 0 & 1 & 0 & 1 & 0 & 0 & 0 & 0 & 1 \\
	0 & 1 & 0 & 0 & 0 & 1 & 0 & 1 & 0 & 1 & 0 & 0 \\
	0 & 0 & 1 & 1 & 0 & 0 & 0 & 1 & 0 & 0 & 0 & 1 \\
	0 & 0 & 1 & 0 & 1 & 0 & 0 & 0 & 1&  1 & 0 & 0 \\
	0 & 0 & 1 & 0 & 0 & 1 & 1 & 0 & 0 & 0 & 1 & 0  \\
	\end{pmatrix}$,
	$I_3= \begin{pmatrix} 1 & 0 & 0\\ 0 & 1 & 0\\ 0 & 0 & 1 \end{pmatrix}$,
	\setcounter{MaxMatrixCols}{12}
	$A \otimes I_3 = \begin{pmatrix}
    I_3 & 0 & 0 & I_3 & 0 & 0&  I_3 & 0 & 0 & I_3 & 0 & 0\\
	I_3 & 0 & 0 & 0 & I_3 & 0 & 0 & I_3 & 0 & 0 & I_3 & 0 \\
	I_3 & 0 & 0 & 0 & 0 & I_3 & 0 & 0 & I_3 & 0 & 0 & I_3 \\
	0 & I_3 & 0 & I_3 & 0 & 0 & 0 & 0 & I_3 & 0 & I_3 & 0 \\
	0 & I_3 & 0 & 0 & I_3 & 0 & I_3 & 0 & 0 & 0 & 0 & I_3 \\
	0 & I_3 & 0 & 0 & 0 & I_3 & 0 & I_3 & 0 & I_3 & 0 & 0 \\
	0 & 0 & I_3 & I_3 & 0 & 0 & 0 & I_3 & 0 & 0 & 0 & I_3 \\
	0 & 0 & I_3 & 0 & I_3 & 0 & 0 & 0 & I_3 & I_3 & 0 & 0 \\
	0 & 0 & I_3 & 0 & 0 & I_3 & I_3 & 0 & 0 & 0 & I_3 & 0  \\
	\end{pmatrix}$.
}
\ \\

Construction in Proposition~\ref{prop:kroneckerplus}, using $2$-CFF(9,12) $A$ and $1$-CFF(6,12) $B$:\\

{\tiny
	\centering	\setcounter{MaxMatrixCols}{12}
	$B= \begin{pmatrix}
	1&  1 & 1 & 1 & 1 & 1&  0 & 0 & 0 & 0 & 0 & 0\\
	1 & 1 & 1 & 1 & 0 & 0 & 1 & 1 & 0 & 0 & 0 & 0 \\
	1 & 0 & 0 & 0 & 1 & 1 & 0 & 0 & 1 & 1 & 1 & 0 \\
	0 & 1 & 0 & 0 & 1 & 0 & 1 & 0 & 1 & 1 & 0 & 1 \\
	0 & 0 & 1 & 0 & 0 & 1 & 0 & 1 & 1 & 0 & 1 & 1 \\
	0 & 0 & 0 & 1 & 0 & 0 & 1 & 1 & 0 & 1 & 1 & 1 \\
	\end{pmatrix}$,
	$R_3= \begin{pmatrix} 1 & 1 & 1\\ \end{pmatrix}$,
	\setcounter{MaxMatrixCols}{1}
	$\begin{pmatrix}A  \otimes R_3\\ \hline B  \otimes I_3\end{pmatrix}$ 
	\setcounter{MaxMatrixCols}{12}$= \begin{pmatrix}
	R_3 & 0 & 0 & R_3 & 0 & 0&  R_3 & 0 & 0 & R_3 & 0 & 0\\
	R_3 & 0 & 0 & 0 & R_3 & 0 & 0 & R_3 & 0 & 0 & R_3 & 0 \\
	R_3 & 0 & 0 & 0 & 0 & R_3 & 0 & 0 & R_3 & 0 & 0 & R_3 \\
	0 & R_3 & 0 & R_3 & 0 & 0 & 0 & 0 & R_3 & 0 & R_3 & 0 \\
	0 & R_3 & 0 & 0 & R_3 & 0 & R_3 & 0 & 0 & 0 & 0 & R_3 \\
	0 & R_3 & 0 & 0 & 0 & R_3 & 0 & R_3 & 0 & R_3 & 0 & 0 \\
	0 & 0 & R_3 & R_3 & 0 & 0 & 0 & R_3 & 0 & 0 & 0 & R_3 \\
	0 & 0 & R_3 & 0 & R_3 & 0 & 0 & 0 & R_3 &  R_3 & 0 & 0 \\
	0 & 0 & R_3 & 0 & 0 & R_3 & R_3 & 0 & 0 & 0 & R_3 & 0  \\ \hline
	I_3&  I_3 & I_3 & I_3 & I_3 & I_3&  0 & 0 & 0 & 0 & 0 & 0\\
	I_3 & I_3 & I_3 & I_3 & 0 & 0 & I_3 & I_3 & 0 & 0 & 0 & 0 \\
	I_3 & 0 & 0 & 0 & I_3 & I_3 & 0 & 0 & I_3 & I_3 & I_3 & 0 \\
	0 & I_3 & 0 & 0 & I_3 & 0 & I_3 & 0 & I_3 & I_3  & 0 & I_3  \\
	0 & 0 & I_3  & 0 & 0 & I_3 & 0 & I_3  & I_3  & 0 & I_3  & I_3  \\
	0 & 0 & 0 & I_3 & 0 & 0 & I_3 & I_3 & 0 & I_3 & I_3 & I_3 \\
	\end{pmatrix}$.
}
	\caption{Two $(\mathcal{S},2)$-CFF($27,36$), $\mathcal{S}$ consists of 12 disjoint edges of size 3. Up to six defective items concentrated within 2 edges can be identified.}
	\label{variable-cff-example}
\end{figure}
\vspace{-5mm}

\subsection{Array and Hypercube Constructions} \label{hyperpoly}
An array-based scheme for group testing uses an $n_1 \times n_2$ array, where each entry of the array corresponds to an item to be tested and the tests are performed on rows and columns, for a total of $n_1+n_2$ tests. This can be used on a 2-stage algorithm, where all items at the intersection of a positive row and column should be individually tested in a second stage to solve ambiguities~\cite{Sudbury,Hudgens,Kim}. {For $d=1$ defective item, one stage is enough.} {Figure~\ref{array} (a) shows a $5\times 5$ array with defective items in red}. 
This idea can be generalized to higher dimensions, constructing an $n_1\times\ldots \times n_k$ hypercube~\cite{Berger,Kim3D}, which is a $1$-CFF$(n_1+\ldots+n_k,n_1\times \ldots \times n_k)$.
Figure~\ref{array} (b) shows a $3$-dimensional hypercube, where each point represents an item and tests are given by fixing the value of one dimension. If all defective items are clustered in either a row or a column in a 2-dimensional array, we can precisely identify all of them in one round, thus this is a structure-aware $(\mathcal{ S},1)$-CFF$(2n,n^2)$ for $\mathcal{S}$ corresponding to rows and columns. 
We generalize this for higher dimensions in the next proposition.
To simplify the notation, we take $n_1=\ldots=n_k=n$, but the next results are valid for the general case.
An $[n]^k$-hypercube group testing matrix is an $1$-CFF$(kn,n^k)$ matrix defined as follows.
Items are in $\mathbb{Z}_n^k$ and rows/tests are given by $T_{v,a}=\{x\in \mathbb{Z}_n^k: x_v=a \}$, $1\leq v \leq k$, $a\in \mathbb{Z}_n$. Denote $x(v)=(x_1,\ldots,x_{v-1},x_{v+1},\ldots,x_k)$ for $x\in \mathbb{Z}_n^k$, $1\leq v\leq k$.
\vspace{-15pt}
\begin{center}
\begin{figure}[th]
\centering
\includegraphics[width=0.4\textwidth]{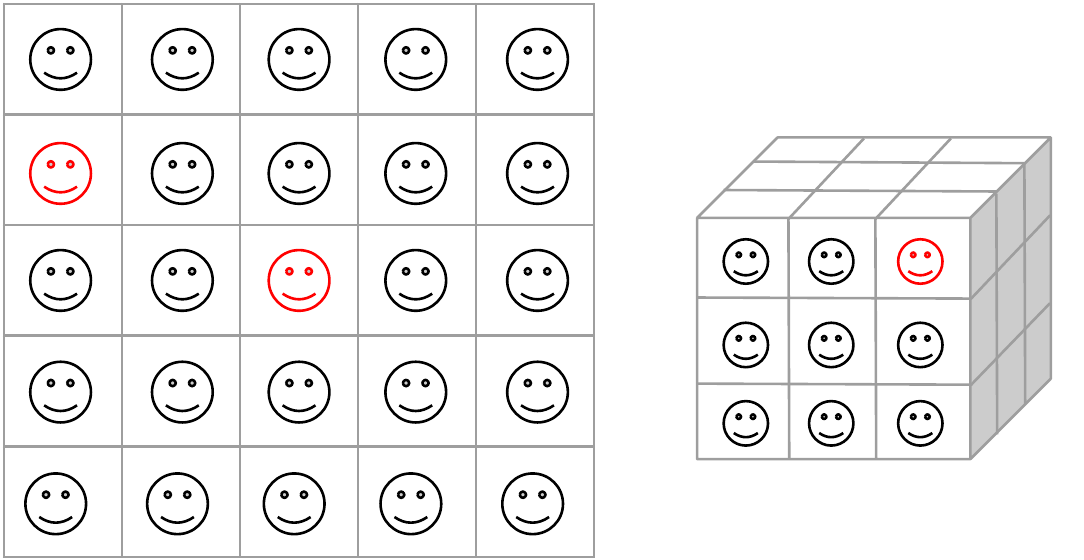}
\caption{(a) A $5\times5$ array GT with $25$ items and 10 tests. (b) A $3\times3\times3$ hypercube GT with $27$ items and $9$ tests.} \label{array}
\end{figure}
\end{center}
\vspace{-20pt}
\begin{proposition}\label{propArray}
Let $A$ be an $[n]^k$-hypercube group testing matrix.
Let $\mathcal{H}_v=([1,n],\mathcal{S}_v)$ where $\mathcal{S}_v=\{\{x\in \mathbb{Z}_n^k: x(v)=(a_1,\ldots,a_{k-1})\}:
(a_1,\cdots, a_{k-1})\in \mathbb{Z}_n^{k-1}\}$, $1\leq v \leq k$,
and let $\mathcal{H}=([1,n],\mathcal{S})$ where $\mathcal{S}=\mathcal{S}_1 \cup \cdots \cup \mathcal{S}_k$. 
Then,
 for any $1\leq v \leq k$, $A$ is an $(\mathcal{S}_v,1)$-CFF($kn,n^k$) and if $k=2$, $A$ is also an $(\mathcal{S}_v,|\mathcal{S}_v|=n)$-ECFF($2n,n^2$)$.$
Moreover, $A$ is an $(\mathcal{S},1)$-CFF($kn,n^k$).
\end{proposition}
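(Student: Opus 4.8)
The plan is to apply Definition~\ref{variabledef} directly with $r=1$. Since $r=1$, the covering condition is only ever tested against a single hyperedge, so proving $A$ is an $(\mathcal{S},1)$-CFF is the same as proving it is an $(\mathcal{S}_w,1)$-CFF for every $1\le w\le k$; I will establish the latter uniformly in $w$ by one case analysis. Concretely, I fix a line $S\in\mathcal{S}_w$, a subset $I\subseteq S$, and an item $i_0\in[1,n]\setminus I$, and produce a single test in $B_{i_0}$ that lies in no $B_i$, $i\in I$.

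First I record the shape of the data. An item is a point $x\in\mathbb{Z}_n^k$ lying in exactly the $k$ tests $T_{1,x_1},\dots,T_{k,x_k}$, so $B_x=\{T_{v,x_v}:1\le v\le k\}$ (the $kn$ tests are pairwise distinct once $n\ge 2$; $n=1$ is trivial). An edge $S\in\mathcal{S}_w$ is a line with a fixed vector $c=(c_v)_{v\ne w}$, i.e.\ $S=\{x:x_v=c_v\ \text{for all }v\ne w\}$, and its items are pairwise distinguished by their $w$-coordinate. Hence, writing $A_I=\{x_w:x\in I\}\subseteq\mathbb{Z}_n$, we have for nonempty $I$
\[
\bigcup_{i\in I}B_i=\{T_{v,c_v}:v\ne w\}\ \cup\ \{T_{w,a}:a\in A_I\},
\]
while for $I=\emptyset$ this union is empty and the claim is immediate because $B_{i_0}\ne\emptyset$.

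Now the two cases. If $i_0\in S$, then $i_0$ matches $c$ off direction $w$ and, because $i_0\notin I$, its $w$-coordinate is not in $A_I$; thus $T_{w,(i_0)_w}\in B_{i_0}$ while $T_{w,(i_0)_w}\notin\bigcup_{i\in I}B_i$. If $i_0\notin S$, choose a direction $v_0\ne w$ with $(i_0)_{v_0}\ne c_{v_0}$ — such a $v_0$ exists precisely because $i_0$ lies off the line; then $T_{v_0,(i_0)_{v_0}}\in B_{i_0}$, but the only direction-$v_0$ test in the union displayed above is $T_{v_0,c_{v_0}}\ne T_{v_0,(i_0)_{v_0}}$, so again $T_{v_0,(i_0)_{v_0}}\notin\bigcup_{i\in I}B_i$. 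Either way $|B_{i_0}\setminus\bigcup_{i\in I}B_i|\ge 1$, which is exactly condition~(\ref{eqn:vcff}) for $r=1$; since $w$ and $S$ were arbitrary, $A$ is an $(\mathcal{S},1)$-CFF$(kn,n^k)$.

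There is no real obstacle here: the whole content is bookkeeping of which coordinate hyperplanes appear in $\bigcup_{i\in I}B_i$, together with the observation that $i_0\notin I$ always leaves one coordinate of $i_0$ ``unspent.'' The only care-points are the degenerate case $I=\emptyset$, the existence of the auxiliary direction $v_0$ when $i_0\notin S$, and noticing that handling an arbitrary line $S$ simultaneously dispatches every $\mathcal{S}_v$ and hence the full hypergraph $\mathcal{S}$ — the same bookkeeping, specialized to $w=v$ (resp.\ to unions of parallel lines in the plane), also yields the $(\mathcal{S}_v,1)$-CFF and, for $k=2$, the $(\mathcal{S}_v,n)$-ECFF assertions of the proposition.
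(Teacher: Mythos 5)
Your proof is correct and follows the same two-case analysis (the target item $i_0$ lying on the line $S$ versus off it, using the ``unspent'' coordinate of $i_0$ in each case) that the paper employs for this family of constructions (compare the proof of Theorem~\ref{theo-poly}); your explicit computation of $\bigcup_{i\in I}B_i$ as $\{T_{v,c_v}:v\ne w\}\cup\{T_{w,a}:a\in A_I\}$ is exactly the right bookkeeping, and the reduction of the $(\mathcal{S},1)$-CFF claim to the single-edge case is valid since $r=1$. The one part you dispatch with only a remark is the $(\mathcal{S}_v,n)$-ECFF claim for $k=2$, but the check you allude to is indeed immediate: the union of $\ell<n$ parallel lines covers every test in the orthogonal direction while missing the test $T$ with $i_0\in T$ parallel to those lines, so no gap results.
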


\subsection{Construction from polynomials} 
Now we look at a construction of $d$-CFFs from polynomials over finite fields, given by Erd{\"o}s et al.~\cite{erdospoly}. 
Let $q$ be a prime power, $k$ a positive integer, and $\mathbb{F}_q = \{e_1, \ldots, e_q\}$ be a finite field. We define $\mathcal{F}=$($X, \mathcal{B}$) as follows, for each polynomial $f \in \mathbb{F}_q[x]_{\leq k}$ of degree at most $k$: $X = \mathbb{F}_q \times \mathbb{F}_q,$ 
 $B_f = \{(e_1, f(e_1)), \ldots, (e_q, f(e_q))\},$
 $\mathcal{B} = \{B_f: f \in \mathbb{F}_q[x]_{\leq k}\}.$
Then, $\mathcal{F}$ is a $d$-CFF($t=q^2, n=q^{k+1}$) for $d \leq \frac{q-1}{k}$.

This $d$-CFF has an interesting structure, which allows us to discard some rows when smaller values of $d$ are enough~\cite{AMC}. We restrict the CFF matrix to $i$ \emph{blocks} of rows by considering $X = \{e_1, \ldots, e_i\} \times \mathbb{F}_q$, $B_f(i) = \{(e_1, f(e_1)), \ldots, (e_i, f(e_i))\}$ and $\mathcal{B}(i) = \{B_f(i): f \in \mathbb{F}_q[x]_{\leq k}\}$, which yields the following result.
    
\begin{proposition}[Idalino and Moura~\cite{AMC}, Theorem 3.2]\label{restrictedCFF}
    Let $q$ be a prime power, $k\geq 1$ and $q \geq dk+1$, and let $\mathcal{M}$ be the $d$-CFF($q^2, q^{k+1}$) obtained from the polynomial construction. If we restrict $\mathcal{M}$ to the first $(d'k+1)$ blocks of rows, we obtain a $d'$-CFF($(d'k+1)q, q^{k+1}$), for any $d'\leq d$.
    \end{proposition}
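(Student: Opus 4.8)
The plan is to recycle the one structural fact behind the original polynomial construction: two distinct polynomials $f,g\in\mathbb{F}_q[x]_{\leq k}$ agree on at most $k$ points of $\mathbb{F}_q$, since $f-g$ is a nonzero polynomial of degree at most $k$ and hence has at most $k$ roots. The point is that this bound is ``dimension-free'' in the number of evaluation points one keeps: if $B_f(i)$ records only the evaluations at $e_1,\dots,e_i$, then still $|B_f(i)\cap B_g(i)|\leq k$ for every $i$ whenever $f\neq g$. So deleting all but the first $i$ blocks of rows cannot create new large intersections, and the counting argument that proves the $d$-CFF property for the full matrix survives as long as enough blocks remain.

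Concretely, I would fix $d'\leq d$ and set $i=d'k+1$; note $i\leq dk+1\leq q$, so the first $i$ blocks genuinely exist, and the restricted matrix $\mathcal{M}'$ has $iq=(d'k+1)q$ rows and $q^{k+1}$ columns, still indexed by the polynomials $f\in\mathbb{F}_q[x]_{\leq k}$, with $|B_f(i)|=i$ (the pairs $(e_\ell,f(e_\ell))$, $1\leq\ell\leq i$, are distinct because $e_1,\dots,e_i$ are). Since $i\geq k+1$, the map $f\mapsto B_f(i)$ is injective, so $\mathcal{M}'$ has $q^{k+1}$ distinct columns. To check the $d'$-CFF property via Definition~\ref{defCFF}, take any $d'+1$ columns, corresponding to distinct polynomials $f_0,f_1,\dots,f_{d'}$. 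Then
\[
\Big|B_{f_0}(i)\cap\bigcup_{j=1}^{d'}B_{f_j}(i)\Big|\;\leq\;\sum_{j=1}^{d'}\big|B_{f_0}(i)\cap B_{f_j}(i)\big|\;\leq\;d'k\;<\;d'k+1\;=\;|B_{f_0}(i)|,
\]
hence $B_{f_0}(i)\setminus\bigcup_{j=1}^{d'}B_{f_j}(i)\neq\emptyset$, which is exactly inequality~(\ref{property:cff}) for parameter $d'$. Therefore $\mathcal{M}'$ is a $d'$-CFF$((d'k+1)q,\,q^{k+1})$, as claimed.

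There is essentially no hard step: the whole argument is the observation above plus a union bound. The only items that warrant a line of care are (i) confirming $d'k+1\leq q$ so the truncation is well defined (this is where the hypothesis $q\geq dk+1$ and $d'\leq d$ are used), and (ii) confirming the truncated columns remain pairwise distinct so that $\mathcal{M}'$ is an honest CFF on $q^{k+1}$ items and not on fewer — which is automatic the moment $i\geq k+1$, i.e. $d'\geq 1$.
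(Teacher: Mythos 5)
Your proof is correct and is the standard argument for this result: the fact that two distinct polynomials of degree at most $k$ agree in at most $k$ evaluation points, combined with the union bound $d'k < d'k+1 = |B_{f_0}(i)|$, is exactly how the cited Theorem 3.2 of Idalino and Moura is proved. The side checks ($d'k+1\leq q$ so the truncation exists, and injectivity of $f\mapsto B_f(i)$ for $i\geq k+1$) are the right ones and are handled correctly.
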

    
For instance, for $q=5$ and $k=1$, if we restrict a $4$-CFF($5^2,5^2$) to its first $2$ blocks of rows, we get a $1$-CFF($2\times 5, q^2$), with $3$ blocks of rows we get a $2$-CFF($3\times 5, q^2$), etc. 
Next we show that this construction is an structure-aware CFF that can tolerate as many as $q$ errors with as few as $(k+1)q$ tests.

\begin{theorem}\label{theo-poly}
Let $k\geq 1$ and $q$ be a prime power such that $q\geq k+1$.
Let $\mathcal{S}= \{S_1,S_2,\ldots,S_{q^k}\}$ be a set-partition of $[1,n]$ such that $|S_i|=q$ for all $1 \leq i \leq q^k$. Then, there exists an $(\mathcal{S},1)$-CFF($(k+1)q, q^{(k+1)}$). If $k=1$, it is also an $(\mathcal{S},q)$-ECFF($2q, q^{2}$).
\end{theorem}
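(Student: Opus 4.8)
The plan is to instantiate $\mathcal{M}$ as a row-restriction of the polynomial $d$-CFF construction of~\cite{erdospoly} (the one used in Proposition~\ref{restrictedCFF}), choosing the correspondence between items and polynomials so that the hyperedges of $\mathcal{S}$ become the fibres of evaluation at a fixed set of $k$ points. Concretely, write $\mathbb{F}_q=\{e_1,\dots,e_q\}$; since $q\geq k+1$ we have $k+1$ distinct elements $e_1,\dots,e_{k+1}$ at our disposal. I would identify $[1,n]$ with $\mathbb{F}_q[x]_{\leq k}$ (both have size $q^{k+1}$) via a bijection $\phi$ chosen so that each block $S_i$ of $\mathcal{S}$ maps onto a set of the form $T_{(a_1,\dots,a_k)}=\{f\in\mathbb{F}_q[x]_{\leq k}:f(e_1)=a_1,\dots,f(e_k)=a_k\}$. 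This is possible because $\{T_{(a_1,\dots,a_k)}:(a_1,\dots,a_k)\in\mathbb{F}_q^k\}$ is itself a partition of $\mathbb{F}_q[x]_{\leq k}$ into $q^k$ blocks of size exactly $q$ (a degree-$\leq k$ polynomial is determined by its values at any $k+1$ points, so fixing $k$ of them leaves $q$ free choices), hence has the same shape as $\mathcal{S}$. Then I let $\mathcal{M}$ be the $(k+1)q\times q^{k+1}$ matrix obtained by restricting that polynomial CFF to its first $k+1$ blocks of rows (evaluation points $e_1,\dots,e_{k+1}$) and re-indexing its columns by $[1,n]$ via $\phi$. The plain $1$-CFF property of this restricted matrix is already Proposition~\ref{restrictedCFF} with $d'=1$ (which needs exactly $q\geq k+1$); the work is to upgrade it to the structure-aware conclusion.

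To verify the $(\mathcal{S},1)$-CFF property I would fix a block, identify it with some $T_g:=\{f:f(e_1)=g(e_1),\dots,f(e_k)=g(e_k)\}$, fix $I\subseteq T_g$ and an item $h\notin I$, and produce a row of $\mathcal{M}$ containing $h$ but no member of $I$, that is, an index $v\leq k+1$ with $h(e_v)\notin\{f(e_v):f\in I\}$. If $h\notin T_g$, then $h(e_v)\neq g(e_v)$ for some $v\leq k$, while every $f\in I\subseteq T_g$ has $f(e_v)=g(e_v)$, so that $v$ works. If instead $h\in T_g\setminus I$, then for each $f\in I$ the polynomial $h-f$ is nonzero, of degree at most $k$, and vanishes at the $k$ distinct points $e_1,\dots,e_k$; hence $h-f=\lambda_f\prod_{i=1}^k(x-e_i)$ with $\lambda_f\neq 0$, and evaluating at $e_{k+1}$ (which differs from $e_1,\dots,e_k$) gives $h(e_{k+1})-f(e_{k+1})=\lambda_f\prod_{i=1}^k(e_{k+1}-e_i)\neq 0$, so $v=k+1$ works. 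These two cases are exhaustive (in particular the extreme case $|I|=q$ forces $I=T_g$, hence the first case), which gives inequality~(\ref{eqn:vcff}) with $r=1$.

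For the ECFF claim, when $k=1$ the partition $\mathcal{S}$ has $q^k=q$ blocks, which under $\phi$ are the fibres $T_\beta=\{f:f(e_1)=\beta\}$, $\beta\in\mathbb{F}_q$, and $\mathcal{M}$ has $2q$ rows. Given $\ell\leq q$ blocks $T_{\beta_1},\dots,T_{\beta_\ell}$ with union $S=\bigcup_j T_{\beta_j}$ and an item $h\notin S$ (necessarily $\ell<q$), we have $h(e_1)\notin\{\beta_1,\dots,\beta_\ell\}$, so the single row indexed by $(e_1,h(e_1))$ contains $h$ and avoids every $f\in S$; this is inequality~(\ref{eq:ecff}) with $r=q$. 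Thus the same matrix $\mathcal{M}$ is simultaneously an $(\mathcal{S},1)$-CFF$(2q,q^2)$ and an $(\mathcal{S},q)$-ECFF$(2q,q^2)$.

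I do not expect a serious obstruction here; the only points needing care are (a) justifying the relabelling $\phi$ and noting that the very same $\phi$ serves the CFF statement and, when $k=1$, the ECFF statement, and (b) the elementary degree argument in the second case, where the extra evaluation point $e_{k+1}$ — available precisely because $q\geq k+1$ — is exactly what lets a single remaining degree of freedom be resolved. I would expect the full proof to be only slightly longer than this sketch.
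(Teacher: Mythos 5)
Your proposal is correct and follows essentially the same route as the paper: both use the restriction of the Erd\H{o}s--Frankl--F\"uredi polynomial matrix to the first $k+1$ blocks of rows, identify the hyperedges with the fibres of evaluation at $e_1,\dots,e_k$, and split into the same two cases (the target polynomial outside the defective block, handled by the first coordinate where the evaluations differ; the target inside the block, handled by evaluation at $e_{k+1}$ via the fact that two degree-$\leq k$ polynomials agreeing at $k+1$ points coincide). The $k=1$ ECFF argument is also identical, using the block of rows indexed by evaluation at $e_1$.
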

\begin{proof}
Each column of the 01-matrix $\mathcal{M}$ is associated with a polynomial $p\in \mathbb{F}_q[x]$ of degree at most k. Letting $\mathbb{F}_q=\{e_1, \ldots, e_q\}$, identify the blocks with $S_{i_1,\ldots,i_k}=\{ p \in \mathbb{F}_q[x] : p(e_j) = e_{i_j}, for\ 1\leq j \leq k \}$, for $(i_1,\ldots,i_k)\in [1,q]^k$.
Each row of $\mathcal{M}$ is associated with pair $(x,y)\in \mathbb{F}_q\times \mathbb{F}_q$, and $\mathcal{M}_{(x,y),p}=1$ if and only if $p(x)=y$.
Let $E\subseteq S_{i_1,\ldots,i_k}$, for some $(i_1,\ldots,i_k)\in[1,q]^k$, be a set of defective items.  
We need to show that for any column $p \notin E$, there exists a row $(c,d)$ s.t. $p(c) = d$ and $f(c) \neq d$, $\forall f \in E$.
Let $p\notin E$ and $p\in S_{j_1,\ldots,j_k}$. We consider two cases.\\
Case i) $p \in S_{i_1,\ldots,i_k} \setminus E$:
Taking $(c,d)=(e_{k+1},p(e_{k+1}))$, we know for any $f\in E$, $f(e_{k+1})\neq p(e_{k+1})$; for otherwise, since they already have the same evaluation for $e_{1}, \ldots, e_{k}$, this would imply they would be the same polynomial.\\
Case ii) $p \not\in S_{i_1,\ldots,i_k}$:
Let $\ell=min\{t: j_t\neq i_t\}$ and take $(c,d)=(e_{\ell},p(e_{\ell}))$.
We claim $\mathcal{M}_{(e_{j_{\ell}},p(e_{j_{\ell}})),p}=1$ and $\mathcal{M}_{(e_{j_{\ell}},p(e_{j_{\ell}})),f}=0$ for all $f \in E$.
Indeed, by the block definitions, for $f\in E$, $f(e_\ell) = e_{i_{\ell}} \neq e_{j_{\ell}}=p(e_{\ell})$.

If $k=1$, one block of rows in $\mathcal{M}$ has each test coinciding with each edge. Thus, $\mathcal{M}$ is also a $(\mathcal{S},q)$-ECFF($2q, q^2$).\qed

\end{proof}

As an example, for $q = 5$ and $k=1$ we have edges $S_{1} = \{0, x, 2x, 3x, 4x\}, S_{2} = \{1, x+1, 2x+1, 3x+1, 4x+1\}, S_{3} = \{2, x+2, 2x+2, 3x+2, 4x+2\}, S_{4} = \{3, x+3, 2x+3, 3x+3, 4x+3\}, \text{ and } S_{5} = \{4, x+4, 2x+4, 3x+4, 4x+4\}.$ This gives us an $(\mathcal{S},1)$-CFF($2q = 10, q^{2} = 25$) with $\mathcal{S} = \{S_1, S_2, S_3, S_4, S_5\}$, which allows us to find as many as $q=5$ defective items, as long as they are all in one of the edges $S_i$, and to find which edges are defective in any case.

Note that the construction in Theorem~\ref{theo-poly} is equivalent to a $[q]^{k+1}$-hypercube, but it is more flexible since we can add more tests (Proposition~\ref{restrictedCFF}) for a total of $(dk+1)q$ tests, where $q\geq dk+1$, to obtain both a $(\mathcal{S},1)$-CFF$((dk+1)q,q^{k+1})$ and a $d$-CFF$((dk+1)q,q^{k+1})$, so any $d$ defects anywhere  or $q>d$ defects inside an edge can be found.

\section{Structure-aware CFFs: overlapping edges}\label{sec:overlap}

Here, edge colouring of hypergraphs is used to partition the edges of the graph into sets of non-overlapping edges (colour classes) allowing the use of previous constructions to deal with each colour class.
An $\ell$-edge-colouring of a hypergraph $\mathcal{H}=(V,\mathcal{S})$ is a mapping from $\mathcal{S}$ to $\{1,\ldots, \ell\}$ such that no vertex is incident to more than one edge mapping to the same colour.
Let $\chi'(\mathcal{H})$ be the edge chromatic number of hypergraph $\mathcal{H}$, which is the minimum $\ell$ among all $\ell$-edge-colourings.

\begin{theorem}~\label{hyperpacking}
Let $\mathcal{H}=([1,n],{\mathcal S})$ be a hypergraph and let ${\mathcal C}_1, \mathcal{C}_2, \ldots, \mathcal{C}_\ell$ be the sets of edges in each colour class of an $\ell$-edge-colouring of $\mathcal{H}$. For each $i$, $1\leq i \leq \ell$, let $k_i=\max\{|A|:A\in \mathcal{C}_i\}$ and let $f_i=|\mathcal{C}_i|+\delta_i$ , where $\delta_i=0$ if the $\mathcal{C}_i$ spans $[1,n]$ and $\delta_i=1$, otherwise. Then, given $1$-CFF$(t_i,f_i)$ for $1\leq i \leq \ell$, we can construct a $(\mathcal{S},1)$-CFF$(t,n)$  where
$t=\sum_{i=1}^\ell (t_i + k_i) $; moreover we can construct
an $(\mathcal{S},1)$-ECFF$(t_i,n)$.
\end{theorem}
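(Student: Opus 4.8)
The plan is to build the matrix $\mathcal{M}$ by vertical concatenation of $\ell$ blocks, one per colour class, and within each block apply Proposition~\ref{prop:spernerunlimited} (the $r=1$, unlimited $L_{\mathcal{M}}$ Sperner-type construction) to the non-overlapping subhypergraph $\mathcal{H}_i=([1,n],\mathcal{C}_i)$. Concretely, for colour class $i$, the edges of $\mathcal{C}_i$ are pairwise disjoint (by definition of edge-colouring) with cardinalities at most $k_i$. If $\mathcal{C}_i$ does not span $[1,n]$, add one dummy edge collecting the uncovered vertices, so we have $f_i=|\mathcal{C}_i|+\delta_i$ disjoint edges spanning $[1,n]$; using the given $1$-CFF$(t_i,f_i)$ as the matrix $A$ in Proposition~\ref{prop:spernerunlimited}, we obtain a matrix $\mathcal{M}_i$ which is an $(\mathcal{C}_i,1)$-CFF$(t_i+k_i,n)$, and whose top $t_i$ rows alone form an $(\mathcal{C}_i,1)$-ECFF$(t_i,n)$. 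Let $\mathcal{M}$ be the vertical concatenation of $\mathcal{M}_1,\ldots,\mathcal{M}_\ell$; then $t=\sum_{i=1}^\ell(t_i+k_i)$ as claimed, and $\mathcal{M}'$, the concatenation of just the ECFF parts, has $\sum t_i$ rows.

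The key step is verifying the SCFF covering property~(\ref{eqn:vcff}) for $\mathcal{M}$ with $r=1$: given a single hyperedge $S\in\mathcal{S}$, a set $I\subseteq S$, and $i_0\in[1,n]\setminus I$, we need a row of $\mathcal{M}$ that tests $i_0$ and avoids every column in $I$. The edge $S$ received some colour $i$ in the colouring, so $S\in\mathcal{C}_i$; since $\mathcal{M}_i$ is an $(\mathcal{C}_i,1)$-CFF, applying its defining property to the single edge $S\in\mathcal{C}_i$, the set $I\subseteq S$, and $i_0$, yields a row of $\mathcal{M}_i$ — hence a row of $\mathcal{M}$ — with the required property. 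This is exactly the point of edge-colouring: every edge of $\mathcal{H}$ lives inside some colour class, so the per-class guarantee transfers verbatim. For the ECFF claim, the argument is identical but uses the ECFF property of the $t_i$-row prefix of $\mathcal{M}_i$: given $S\in\mathcal{S}$ with colour $i$ and $i_0\notin S$, the $(\mathcal{C}_i,1)$-ECFF property of that prefix supplies a row testing $i_0$ and avoiding all of $S$; note here we do not even need to concatenate all prefixes — a single block $\mathcal{M}'_i$ already works, so in fact $\mathcal{M}'$ (concatenation of the $\ell$ prefixes) is an $(\mathcal{S},1)$-ECFF$(\sum_i t_i,n)$; if one only wants a single ECFF block the statement should be read with $t_i$ replaced appropriately, matching the theorem's phrasing.

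I expect the only real subtlety — not an obstacle so much as a bookkeeping point — to be the handling of the dummy edge when a colour class fails to span $[1,n]$: one must check that adding it does not break the disjointness hypothesis of Proposition~\ref{prop:spernerunlimited} (it does not, since the dummy edge is by construction disjoint from all real edges of $\mathcal{C}_i$) and that columns indexed by the dummy vertices still get a consistent assignment in $\mathcal{M}_i$ (they do: Proposition~\ref{prop:spernerunlimited} repeats the $A$-column of the incident edge and pastes an identity block, and $i_0$ in a dummy edge is never a genuine query of interest but the property still holds). Everything else is a direct application of the cited proposition plus the defining property of edge-colourings, so no probabilistic or counting argument is needed.
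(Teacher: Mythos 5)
Your construction is essentially the paper's own: decompose by colour class, add a dummy edge where a class fails to span $[1,n]$, apply the Sperner-plus-identity construction of Proposition~\ref{prop:spernerunlimited} to each resulting non-overlapping hypergraph, vertically concatenate, and verify the covering property by splitting into the cases $i_0\in S$ (handled by the identity block) and $i_0\notin S$ (handled by the $1$-CFF block), with the $M_i$-parts alone giving the ECFF. The one bookkeeping point you flag but do not quite close is that a literal application of Proposition~\ref{prop:spernerunlimited} would paste an identity of dimension $|E_i|$ under the dummy edge, which could exceed $k_i$ and inflate the row count beyond $t_i+k_i$; the paper avoids this by placing zero columns of its $k_i\times n$ block $N_i$ under the dummy vertices, which suffices because the covering requirement in Definition~\ref{variabledef} only quantifies over edges of $\mathcal{S}$, never over the dummy edge.
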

\begin{proof} We first take each colour class plus possibly an additional set so that this set of edges span $[1,n]$.
Let $E_i=[1,n]\setminus(\cup_{S \in \mathcal{C}_i} S)$, $1\leq i \leq \ell$.
Let $\mathcal{F}_i=\mathcal{C}_i$ if  $E_i=\emptyset$ and 
$\mathcal{F}_i=\mathcal{C}_i \cup \{E_i\}$, otherwise.
Let $\mathcal{H}_i=([1,n],\mathcal{F}_i)$ and note $f_i=|\mathcal{F}_i|$. Since $\mathcal{H}_i$ is a hypergraph with non-overlapping edges, we apply a construction inspired
by Proposition~\ref{prop:kroneckerplus} to build a structure-aware CFF for $\mathcal{H}_i$ with $r=1$. Indeed, if each edge has the same cardinality $k_i$ and the colour class spans $[1,n]$, letting $c_i=n/k_i$, we just use a $1$-CFF$(t_i,c_i)$  for $A$ and $B=R_{c_i}$, a row of all 1's which is a $0$-CFF, and apply Proposition~\ref{prop:kroneckerplus}. This means we vertically concatenate $A\otimes R_{k_i}$ with $R_{c_i} \otimes I_{k_i}$ to get a  $({\cal C}_i,1)$-CFF$(t_i+k_i,n)$. We describe next the general case.

Let $\mathcal{F}_i=\{F_1, \ldots, F_{f_i}\}$ and recall that $\mathcal{F}_i$ is a partition of $[1,n]$. 
Let $A_i$ be a $1$-CFF$(t_i,f_i)$.
Let $M_i$ be a $t_i\times n$ build from $A_i$
where column $c$ of $M_i$ is obtained from column $j$ of $A_i$
if vertex $v_c$ is in $F_j$.
Let $N_i$ be a $k_i\times n$ array consisting of a kind of ``identity matrix'' under each edge of  $\mathcal{C}_i$.
More precisely, for each edge $F_j=\{v_1,\ldots,v_{|F_j|}\} \in \mathcal{F}$, $F_j\in\mathcal{C}_i$ the column corresponding to $v_x$ has a 1 in row $x$ and zero elsewhere, $1\leq x \leq |F_j|$. If $E_i\not=\emptyset$, place a column of 0's under vertices in $E_i$.

Now, we vertically concatenate all arrays $M_1, M_2, \ldots, M_\ell, N_1, N_2, \ldots, N_\ell$ to form a $t\times n$ array $\mathcal{M}$.
Next we show that $\mathcal{M}$ is a $(\mathcal{S},1)$-CFF$(t,n)$.
Let $S\in \mathcal{S}$. Take $I\subseteq S$ and $i_0\in[1,n]\setminus I$. $S$ must be in some colour class $\mathcal{C}_i$.
For the case $i_0\in S$, the sub-array formed by columns of $N_i$ indexed by $I\cup \{i_0\}$ contains a row $w$ where $N_i[w,i_0]=1$ and $N_i[w,j]=0$ for all $j\in I$.
For the case $i_0\not\in S$, the sub-array formed by columns of $M_i$ indexed by $I\cup \{i_0\}$ contains a row $w$ such that 
$M_i[w,i_0]=1$ and $M_i[w,j]=0$ for all $j\in I$. This is because $M_i$ is a $1$-CFF on the edges of $F_i$ and $I$ and $i_0$ are each contained in two separate edges of $\mathcal{H}_i$, since
$I\subseteq S \in \mathcal{F}_i$ and $i_0\notin S$ and $\mathcal{F}_i$ spans $[1,n]$.
Thus in both cases, the condition $|B_{i_0}\setminus (\cup_{i \in I} B_i)| \geq 1$ in Equation~\ref{eqn:vcff} of Definition~\ref{variabledef} is satisfied. Thus $\mathcal{M}$ is an  $(\mathcal{S},1)$-CFF$(t=t_1+k_1+\ldots+t_\ell +k_\ell,n)$. It is easy to see that if we only use matrixes $M_1, \ldots, M_\ell$ we obtain an $(\mathcal{S},1)$-ECFF$(t=t_1+\ldots+t_\ell ,n)$.

 \qed
\end{proof}

\begin{corollary} \label{hyperpackinguniform}
Let $\mathcal{H}=([1,n],\mathcal{S})$ be a $k$-uniform hypergraph. Denote by $t(1,x)$ the number of tests $y$ in the Sperner construction of a $1$-CFF$(y,x)$ (See Section~\ref{sec:sperner}).
Then, there exists an $(\mathcal{S},1)$-CFF$(t,n)$ with
$t = \chi'(\mathcal{H})\times (t(1,\lceil (n/k)\rceil)+k) \sim \chi'(\mathcal{H}) ((\log n/k) + k) $, and a $(\mathcal{S},1)$-ECFF$(t',n)$ with 
$t' =  \chi'(\mathcal{H})\times t(1,\lceil (n/k)\rceil\sim \chi'(\mathcal{H}) \log n/k$.
\end{corollary}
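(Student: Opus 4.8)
The plan is to apply Theorem~\ref{hyperpacking} directly, using an optimal edge-colouring of $\mathcal{H}$, and then to simplify the resulting bound by exploiting $k$-uniformity together with the properties of the Sperner construction. First I would fix an $\ell$-edge-colouring of $\mathcal{H}$ with $\ell=\chi'(\mathcal{H})$ colour classes $\mathcal{C}_1,\dots,\mathcal{C}_\ell$, which exists by definition of the edge chromatic number. Since $\mathcal{H}$ is $k$-uniform, every edge has size exactly $k$, so in the notation of Theorem~\ref{hyperpacking} we have $k_i=\max\{|A|:A\in\mathcal{C}_i\}=k$ for every $i$.

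Next I would bound $f_i=|\mathcal{C}_i|+\delta_i$ by $\lceil n/k\rceil$. Each colour class is a set of pairwise disjoint $k$-subsets of $[1,n]$, hence $|\mathcal{C}_i|\le n/k$. If $\mathcal{C}_i$ spans $[1,n]$ then $\delta_i=0$ and the disjoint $k$-sets tile $[1,n]$, so $|\mathcal{C}_i|=n/k\le\lceil n/k\rceil$; if $\mathcal{C}_i$ does not span $[1,n]$ then $\delta_i=1$ and $|\mathcal{C}_i|\cdot k<n$, so $|\mathcal{C}_i|\le\lceil n/k\rceil-1$ and again $f_i\le\lceil n/k\rceil$. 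Taking $1$-CFF$(t_i,f_i)$ to be the Sperner construction, so $t_i=t(1,f_i)$, monotonicity of $t(1,\cdot)$ (covering more columns in a $1$-CFF never needs fewer rows, equivalently $\binom{s}{\lfloor s/2\rfloor}$ is increasing in $s$) gives $t_i\le t(1,\lceil n/k\rceil)$.

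Then I would simply substitute into the conclusion of Theorem~\ref{hyperpacking}: the resulting $(\mathcal{S},1)$-CFF has $t=\sum_{i=1}^{\ell}(t_i+k_i)\le \ell\bigl(t(1,\lceil n/k\rceil)+k\bigr)=\chi'(\mathcal{H})\bigl(t(1,\lceil n/k\rceil)+k\bigr)$, and, using only the matrices $M_1,\dots,M_\ell$ as in the proof of Theorem~\ref{hyperpacking}, the $(\mathcal{S},1)$-ECFF has $t'=\sum_{i=1}^{\ell}t_i\le\chi'(\mathcal{H})\,t(1,\lceil n/k\rceil)$. Finally the asymptotics follow from Sperner's theorem, $t(1,x)\sim\log_2 x$ as $x\to\infty$ (as noted in Section~\ref{sec:sperner}): thus $t(1,\lceil n/k\rceil)\sim\log(n/k)$, giving $t\sim\chi'(\mathcal{H})\bigl((\log n/k)+k\bigr)$ and $t'\sim\chi'(\mathcal{H})\log(n/k)$.

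I do not expect a genuine obstacle here — the corollary is a specialization of Theorem~\ref{hyperpacking}. The only care-point is the case split in the second paragraph: one must treat separately whether a colour class already covers $[1,n]$ (no padding edge $E_i$, so $f_i=n/k$) or not ($\delta_i=1$, so the strict inequality $|\mathcal{C}_i|\cdot k<n$ must be rounded up correctly). Everything else is the elementary monotonicity of the Sperner parameter and bookkeeping of the sum.
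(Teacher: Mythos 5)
Your proposal is correct and follows essentially the same route as the paper: apply Theorem~\ref{hyperpacking} with $\ell=\chi'(\mathcal{H})$ and $k_i=k$, bound $f_i\leq\lceil n/k\rceil$, and plug in the Sperner parameter. Your case split on whether a colour class spans $[1,n]$ is just a more explicit justification of the bound $f_i\leq\lceil n/k\rceil$ that the paper states without detail.
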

\begin{proof}
We can apply Theorem~\ref{hyperpacking} with $\ell=\chi'(\mathcal{H} )$, $k_i=k$ for all $1\leq i\leq \ell$ and note that each colour class contains at most $\lfloor n/d \rfloor$ edges so that $f_i \leq \lceil n/d \rceil$. Note that if each colour class spans $[1,n]$ this is equivalent to applying Proposition~\ref{prop:spernerunlimited} to the non-overlapping hypergraph given by each colour class.\qed
\end{proof}

\begin{example}\label{ex:highschool}
Consider a high school where each student takes $P$ courses per term, in $P$ weekly time periods where in each time period each student attends one courses of their choice. Consider a hypergraph with $n$ vertices corresponding to students and each edge corresponding to students in a course.
In this example $\ell=P$, since each time period forms a colour class. 
We give a tiny example, with $n=18$ students spread of over $P = 2$ time periods morning/afternoon each with $6$ optional courses with $3$ students each. This hypergraph has $m=12$ edges and $n=18$ vertices displayed in the table below. \\

\hspace{-8mm}
{\tiny
\begin{tabular}{c|c|c|c|c|c|c|c|c|c|c|c|c|c|c|c|c|c|c|}
 students:   &  01 & 02 & 03 & 04 & 05 & 06
             &  07 & 08 &09 &10 & 11 &12
             &  13 & 14 &15 &16 & 17 &18\\ \hline 
 course 1& X&X&X& & & & & & & & &  & & & & & & \\
 course 2&  & & &X&X&X& & & & & & & & & & & & \\     
 course 3&  & & & & & &X&X&X& & & & & & & & & \\ 
 course 4&  & & & & & & & & &X&X&X& & & & & &\\ 
 course 5&  & & & & & & & & & & & &X&X&X& & & \\ 
 course 6&  & & & & & & & & & & & & & & &X&X&X\\ \hline
 course 7& X&X& &X& & & & & & & & & & & & & & \\
 course 8&  & &X& &X& & &X& & & & & & & & & & \\     
 course 9&  & & & & &X&X& &X& & & & & & & & & \\ 
 course 10& & & & & & & & & &X& &X& & & & & &X \\ 
 course 11& & & & & & & & & & & & & &X& &X&X& \\ 
 course 12& & & & & & & & & & & & &X& &X& & &X\\  \hline 
\end{tabular}
}
\
{
\tiny
\begin{tabular}{c}
\begin{tabular}{ l|c|c|c|c|c|c|} \cline{2-7}
test \ 1:& 111&111&111&000&000&000\\
test \ 2:& 111&000&000&111&111&000\\
test \ 3:& 000&111&000&111&000&111\\
test \ 4:& 000&000&111&000&111&111\\
\cline{2-7} 
test \ 5:& 100&100&100&100&100&100\\
test \ 6:& 010&010&010&010&010&010\\
test \ 7:& 001&001&001&001&001&001\\ \hline
\end{tabular}\\

\begin{tabular}{l|p{2.6cm}|} \cline{2-2}
test \ 8:& \\
test \ 9:&  \ \ \ permute columns\\
test 10:& \ \ \ of above array\\
test 11:& \ \ \   so that blocks\\
test 12:& \ \ \ of 3 columns are \\
test 13:& \ \ \ placed under edges \\
test 14:& \ \ \ of second period\\
\cline{2-2}
\end{tabular}
\end{tabular}
}
\end{example}
\begin{example}\label{ex:highschool2}
Consider the setup of Example~\ref{ex:highschool}. Let us consider a more realistic scenario of a high school with students taking 4 courses each term like the ones in Ontario, Canada. Suppose $n=900$ students take $P=4$ courses each, each course having $30$ students for a total of $m=120$ courses.
The matrix for each time period $i$ can be build from a 1-CFF$(7,30=120/4)$ to form $M_i$ and identities of order $30$ side-by-side to form $N_i$. Assume there is an outbreak in a single course, involving any number of students ($\leq 30$) in that course. 
We only need $7\times4 = 28$ tests to determine the course where the outbreak took place ($M_i$ build from 1-CFF(7,30), $1\leq i \leq 4$).
A total of $28 + 30\times4 = 148$ tests can be used to identify all infected individuals (up to 30) in this set of $900$ students. Note that our assumption is that there is $r$=1 course that contains all infected individuals, even thought there may be many infected courses (say up to 90 other courses that the infected students also take in other time periods).
In other words the hypergraph is assumed to have a defective cover of size $r=1$, but it is possible that up to $91$ edges are defective.
\end{example}

For $r>1$, we need to use strong edge-colourings to be able to split the problem according to colour classes without too many infected edges appearing in the same colour class.
A strong edge-coloring of a hypergraph $\mathcal{H}$ is an edge-coloring such that any two vertices belonging to distinct edges with the same colour are not adjacent. The strong chromatic index $s'(\mathcal{H})$ is the minimum number of colors in a strong edge-coloring of $\mathcal{H}$.

\begin{theorem}~\label{hyperpacking2}
Let $\mathcal{H}=([1,n],\mathcal{S})$ be a hypergraph and let $r\geq 2$ be an upper bound on the number of edges of a minimal defective cover.
Let $\mathcal{C}_1, \mathcal{C}_2, \ldots, \mathcal{C}_{s'}$ be the sets of edges in each colour class of an $s'$-strong-edge-colouring of $\mathcal{H}$. Let $k_i=\max\{|S|:S\in \mathcal{C}_i\}$.
Then there exists an $(\mathcal{S},r)$-CFF$(t,n)$  with
$t \leq \sum_{i=1}^{s'} (t(r,|\mathcal{C}_i|) +  k_i t(r-1,|\mathcal{C}_i|)).$
\end{theorem}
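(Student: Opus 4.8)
The plan is to mimic the proof of Theorem~\ref{hyperpacking}, but replacing the Sperner/1-CFF ingredients by $r$-CFF and $(r-1)$-CFF ingredients, and replacing ordinary edge-colouring by strong edge-colouring. First I would take an $s'$-strong-edge-colouring of $\mathcal{H}$ with colour classes $\mathcal{C}_1,\ldots,\mathcal{C}_{s'}$. As in the previous proof, for each $i$ I would augment $\mathcal{C}_i$ by one extra edge $E_i=[1,n]\setminus(\cup_{S\in\mathcal{C}_i}S)$ if needed so that $\mathcal{F}_i=\mathcal{C}_i\cup\{E_i\}$ (or just $\mathcal{C}_i$) partitions $[1,n]$; note $|\mathcal{F}_i|\leq|\mathcal{C}_i|+1$, and since each class is a packing, $\mathcal{H}_i=([1,n],\mathcal{F}_i)$ is a non-overlapping hypergraph with maximum edge size $k_i$. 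For each $i$ I would then apply Proposition~\ref{prop:kroneckerplus} (its last sentence, for edges of differing cardinality bounded by $k_i$) with an $r$-CFF$(t(r,|\mathcal{C}_i|),|\mathcal{F}_i|)$ for $A$ and an $(r-1)$-CFF$(t(r-1,|\mathcal{C}_i|),|\mathcal{F}_i|)$ for $B$, obtaining an $(\mathcal{F}_i,r)$-CFF$(t(r,|\mathcal{C}_i|)+k_i\,t(r-1,|\mathcal{C}_i|),n)$, call its matrix $\mathcal{M}_i$. (Here I am slightly abusing notation by writing $t(r,|\mathcal{C}_i|)$ for $t(r,|\mathcal{F}_i|)$; since $|\mathcal{F}_i|\leq|\mathcal{C}_i|+1$ and CFF sizes are monotone in $n$ up to a negligible additive term, or one can simply pad, this is harmless — I would state it as an inequality $t\leq\sum_i(\ldots)$ exactly as in the statement.) Finally, let $\mathcal{M}$ be the vertical concatenation of $\mathcal{M}_1,\ldots,\mathcal{M}_{s'}$, which has $t\leq\sum_{i=1}^{s'}(t(r,|\mathcal{C}_i|)+k_i\,t(r-1,|\mathcal{C}_i|))$ rows.

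It remains to verify that $\mathcal{M}$ is an $(\mathcal{S},r)$-CFF$(t,n)$. Fix an $r$-set of hyperedges $\{S_1,\ldots,S_r\}\subseteq\mathcal{S}$, a set $I\subseteq\bigcup_{j=1}^r S_j$, and an item $i_0\in[1,n]\setminus I$; I must exhibit a row $w$ of $\mathcal{M}$ with $\mathcal{M}_{w,i_0}=1$ and $\mathcal{M}_{w,\ell}=0$ for all $\ell\in I$. The key structural fact supplied by strong edge-colouring is that, within any single colour class $\mathcal{C}_i$, the edges $S_{j_1},\ldots,S_{j_p}$ of $\{S_1,\ldots,S_r\}$ that happen to lie in $\mathcal{C}_i$ are not merely disjoint but pairwise non-adjacent; consequently each such $S_{j}$ is already an edge of the \emph{non-overlapping} hypergraph $\mathcal{H}_i$, and moreover the restriction $I\cap\bigcup\{S_j:S_j\in\mathcal{C}_i\}$ is contained in the union of at most $r$ edges of $\mathcal{H}_i$. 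Now let $\mathcal{C}_{i^*}$ be the colour class containing the edge of $\{S_1,\ldots,S_r\}$ to which $i_0$ belongs — or, if $i_0$ lies in none of $S_1,\ldots,S_r$, pick $i^*$ arbitrary, say $i^*=1$. In $\mathcal{H}_{i^*}$, the item $i_0$ lies in some edge $F$ of $\mathcal{F}_{i^*}$, and the set $I':=I\cap\bigcup_{S_j\in\mathcal{C}_{i^*}}S_j$ lies in the union of at most $r$ edges of $\mathcal{F}_{i^*}$; since $\mathcal{M}_{i^*}$ is an $(\mathcal{F}_{i^*},r)$-CFF, Equation~\eqref{eqn:vcff} applied with this $I'$ and $i_0$ gives a row $w$ of $\mathcal{M}_{i^*}$ killing all of $I'$ while hitting $i_0$. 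It then remains to check that this same row $w$ also kills the items of $I\setminus I'$, i.e. the items of $I$ lying only in edges of other colour classes: but in the block structure of $\mathcal{M}_{i^*}=\bigl(\begin{smallmatrix}A\otimes R_{k_{i^*}}\\ B\otimes I_{k_{i^*}}\end{smallmatrix}\bigr)$, whether $\mathcal{M}_{i^*}{}_{w,\ell}=1$ depends only on which edge $F'$ of $\mathcal{F}_{i^*}$ contains $\ell$, and I would argue (exactly as in the $M_i$/$N_i$ analysis of Theorem~\ref{hyperpacking}, now with $A$ an $r$-CFF rather than a $1$-CFF) that any item in a different $\mathcal{F}_{i^*}$-edge than those meeting $I'\cup\{i_0\}$ is automatically avoided.

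I expect the main obstacle to be precisely this last bookkeeping step: one must be careful that when $i_0\notin S_1\cup\cdots\cup S_r$ the chosen colour class $i^*$ still ``sees'' $i_0$ and $I'$ in distinct edges of $\mathcal{F}_{i^*}$ (this needs $\mathcal{F}_{i^*}$ to span $[1,n]$, which is why the augmenting edge $E_{i^*}$ was added), and that the $(\mathcal{F}_{i^*},r)$-CFF property genuinely handles a set $I'$ drawn from up to $r$ edges — here the hypothesis $r\geq2$ and the use of the $(r-1)$-CFF $B$ inside Proposition~\ref{prop:kroneckerplus} are what make the within-edge (identity) part work, while the across-edge part is governed by the $r$-CFF $A$. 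Once the reduction ``one bad row in one colour block suffices, and that row is blind to the other colour blocks'' is nailed down, the counting of rows is immediate from the construction. I would also remark that dropping the $N_i$ (identity) blocks yields an $(\mathcal{S},r)$-ECFF with $t\leq\sum_{i=1}^{s'}t(r,|\mathcal{C}_i|)$, paralleling the ECFF statement in Theorem~\ref{hyperpacking}, although the theorem as stated only claims the CFF bound.
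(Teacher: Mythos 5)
Your block-per-colour-class architecture ($r$-CFF replicated over the edges of a class for the ``which edge'' information, $(r-1)$-CFF tensored with identities for the ``which vertex inside the edge'' information) is the same as the paper's $M_i/N_i$ construction, but two points break your verification. First, the dummy edge $E_i$. By padding $\mathcal{C}_i$ to a partition $\mathcal{F}_i=\mathcal{C}_i\cup\{E_i\}$ and invoking Proposition~\ref{prop:kroneckerplus} as a black box, you need $I$ to lie in at most $r$ edges of $\mathcal{F}_i$. Strong colouring does give you that each $S_j$ meets at most one edge of $\mathcal{C}_i$ (two same-coloured edges meeting $S_j$ would contain adjacent vertices), so $I$ meets at most $r$ edges of $\mathcal{C}_i$ --- but it may additionally meet $E_i$, i.e.\ up to $r+1$ edges of the partition, which an $r$-CFF ingredient cannot separate from $i_0$. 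The paper explicitly does \emph{not} add the dummy edge: vertices not covered by $\mathcal{C}_i$ get an all-zero column in both $M_i$ and $N_i$, and a zero column is avoided by every row for free.

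Second, and more seriously, the $I\setminus I'$ step. You apply the $(\mathcal{F}_{i^*},r)$-CFF property only to $I'=I\cap\bigcup_{S_j\in\mathcal{C}_{i^*}}S_j$ and then assert the resulting row $w$ is ``automatically'' zero on $I\setminus I'$. That is false: the cover-free property controls only the columns of the designated edges. An item $\ell\in I\setminus I'$ (covered by some $S_j\notin\mathcal{C}_{i^*}$) can still lie inside an edge $F'$ of $\mathcal{F}_{i^*}$ that is not among those meeting $I'\cup\{i_0\}$, and row $w$ may perfectly well carry a $1$ in the ingredient column of $F'$, hence a $1$ under $\ell$. The repair is to account for \emph{all} of $I$ inside the chosen block from the start: by the strong-colouring observation, all of $I$ occupies at most $\ell\leq r$ columns of the ingredient $r$-CFF (and at most $\ell-1\leq r-1$ columns of the $(r-1)$-CFF once the edge containing $i_0$ is excluded), with the leftover items of $I$ sitting on zero columns. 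This forces the paper's case split: if some edge $S\ni i_0$ is disjoint from $I$, the $M_i$ block of its colour class works; otherwise every edge through $i_0$ meets $I$, the $A\otimes R$ rows cannot separate $i_0$ from the part of $I$ sharing its edge (same ingredient column), and one must pass to the $N_v$ block and argue via the $(r-1)$-CFF that the identity under $S$ isolates $i_0$. Your proposal does not make this distinction, so as written the argument does not go through.
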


\begin{proof} 
We can assume w.l.o.g. that every vertex belongs to some edge, otherwise, the vertex can be eliminated from the problem, since it cannot be defective.
The construction is similar to the one in Theorem~\ref{hyperpacking}, but we do not add the extra dummy edge to a colour class.
For each colour class $\mathcal{C}_i = \{C_1, C_2, \ldots, C_{|\mathcal{C}_i|}\}$, we build a matrix $M_i$ and $N_i$. 
For any set of edges $\mathcal{E}=\{S_1, \ldots, S_r\}\subseteq \mathcal{S}$, there is at most $r$ edges in $\mathcal{C}_i$ which intersect any edge in $\mathcal{E}$, due to the definition of strong colouring.
Let $t_i=t(r,|\mathcal{C}_i|).$
Build a $t_i\times n$ matrix $M_i$ using an $r$-CFF$(t_i,|\mathcal{C}_i|)$, say $A_i$, 
where column $c$ of $M_i$ is obtained from column $j$ of $A_i$ if vertex $v_c$ is in edge $C_j\in \mathcal{C}_i$, or is a zero column if $v_c$ is not contained in any edge of $\mathcal{C}_i$.
Since at most $r$ edges are defective in colour class $\mathcal{C}_i$, $M_i$ is enough to determine which edges  in $\mathcal{C}_i$ are defective.
Let $t'_i=t(r-1,|\mathcal{C}_i|).$
To build matrix $N_i$, we use  an $(r-1)$-CFF$(t'_i,|\mathcal{C}_i|)$, say $B_i$, in a construction that is similar to a Kronecker product with an identity matrix $I_{k_i}$, but not quite, since not all edges have cardinality $k_i$.
More precisely, for each edge $C_j=\{v_{i_1}, \ldots, v_{i_{|C_j|}} \}\in \mathcal{C}_{i}$ use column $j$ of  $B_i$ corresponding to $C_j$ to build a $(t'_i\cdot k_i) \times |C_j|$ matrix for this edge as follows. The element at position $B_i[x,j]$ is substituted by a $k_i\times |C_j|$ matrix that consists of the first $|C_j|$ columns of an identity matrix of dimension $k_i$ if $B_i[x,j]=1$, or a 0-matrix if $B_i[x,j]=0$.
The columns of this $(t'_i\cdot k_i)\times |C_j|$ matrix are pasted as columns of $N_i$ that correspond to vertices $v_{i_1}, \ldots, v_{i_{|C_j|}}$, respectively. 
For vertices not contained in any edge in colour class $\mathcal{C}_i$, paste a column of zeroes. This matrix is enough to determine which items in the defective edges of this colour class are defective.
Vertically concatenate $M_1, M_2, \ldots, M_{s'}, N_1, N_2, \ldots, N_{s'}$.
We must show that this is an $(\mathcal{S},r)$-CFF($\sum_{i=1}^{s'}(t_i+k_i t'_i),n$).
Let $I\subseteq \cup_{j=1}^{\ell} S_j$, $\ell\leq r$ and
let $i_0\in [1,n]\setminus I$.
If there exists an edge $S$ such that $i_0 \in S$ and $S\cap I=\emptyset$, we claim that in $M_i$ corresponding to the colour class $\mathcal{C}_i\supset\{S\}$ , we can find 
a row $x$ where $M_i[x,i_0]=1$ and $M_i[x,y]=0$ for all $y\in I$.
This is because in this colour class there are at most $\ell\leq r $ edges that intersect $I$  (due to strong colouring and the fact that $I\subseteq \cup_{j=1}^{\ell} S_j$), and since $M_i$ was built from an $r$-CFF on the edges of this colour class, in some row $x$ columns corresponding to edge $S$ contains 1's while columns corresponding to the other edges intersecting $I$ must have zeros.
The other case to analyse is when $i_0\in S$ such that $S\cap I\not=\emptyset$.
Then $S \cap S_{j_S} \not=\emptyset$ for some $1\leq j_S \leq \ell$. Let $\mathcal{C}_v$ be the colour class that contains $S$. In $N_v$ we claim that we can find a row $x$ where $N_v[x,i_0]=1$ and $N_v[x,y]=0$ for all $y \in I$.
Indeed, due to strong colouring, there are at most $\ell-1\leq r-1$ edges that pass through some vertex in $(\cup_{j=1}^{\ell} S_j)\setminus S_{j_S}$
and $S_{j_S}$ cannot be in the same colour class as $S$.
Since $B_v$ is an $(r-1)$-CFF on the edges of  colour class $\mathcal{C}_v$, there exists
a row $z$ with $B_v[z,S]=1$ and $B_v[z,S_j]=0$ for all $1\leq j\leq \ell$, $j\not= j_S$.
When row $z$ is substituted by the rows corresponding to an identity matrix under $S$, we find a row $x$ of $N_v$ such that $N_v[x,i_0]=1$ and $N_v[x,p]=0$ for all $p \in S\setminus\{i_0\}$ (which implies $N_v[x,p]=0$ for all $p \in S_{j_S}\setminus\{i_0\}$)
and also $N_v[x,q]=0$, for all $q \in S_j$, $1\leq j\leq \ell$, $j\not= j_S$.
This concludes the proof. 
\qed
\end{proof}
\begin{figure}
    \centering
    \includegraphics[width=8cm]{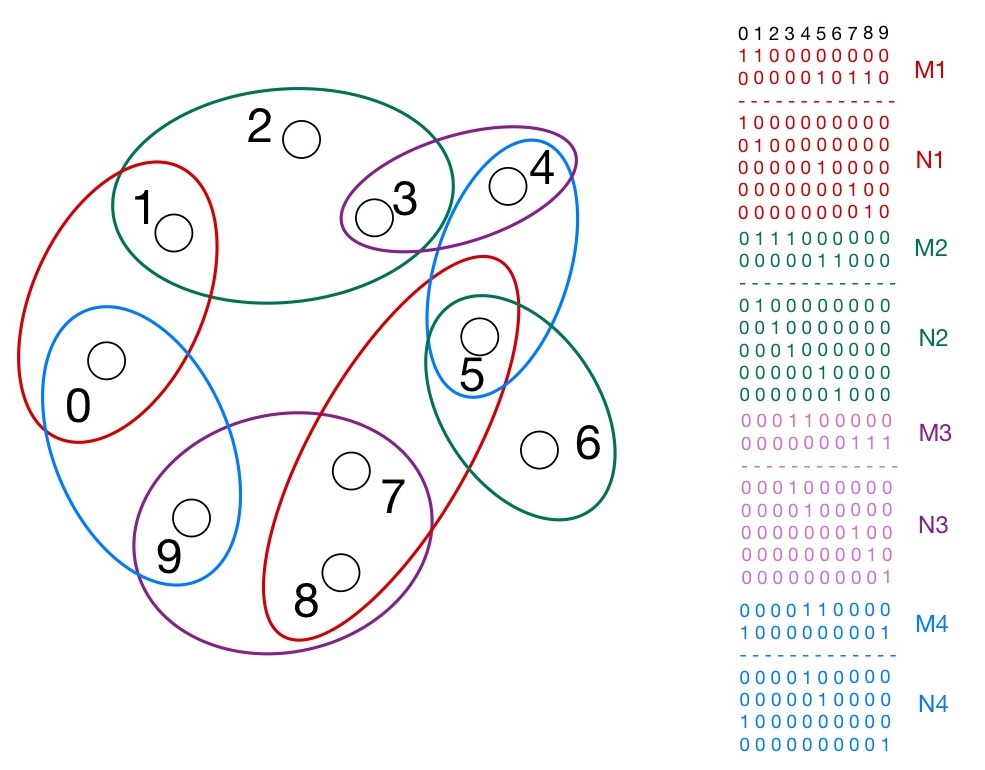}
    \caption{Example application of Theorem~\ref{hyperpacking2}. Due to small size of colour classes, the 1-CFF and 2-CFF used as ingredients are identity matrices $I_2$.}
    \label{fig:my_label}
\end{figure}

\begin{corollary} \label{hyperpackinguniform2}
Let $\mathcal{H}=([1,n],\mathcal{S})$ be a $k$-uniform hypergraph and let $\Delta$ be the maximum degree of a vertex.
Then, we can build a $(\mathcal{S},r)$-CFF$(t,n)$ with $t \leq s'(H)\times (t(r,\lfloor n/k\rfloor)+k t(r-1,\lfloor n/k\rfloor)) \leq (k \Delta +1)\times(t(r,\lfloor n/k\rfloor)+k t(r-1,\lfloor n/k\rfloor)).$
\end{corollary}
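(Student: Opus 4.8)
The plan is to obtain the corollary by instantiating Theorem~\ref{hyperpacking2} on a \emph{minimum} strong edge-colouring of $\mathcal{H}$, so that the number of colour classes is exactly $s'(\mathcal{H})$, and then to simplify the resulting bound using $k$-uniformity. Write $\mathcal{C}_1,\dots,\mathcal{C}_{s'}$ for the colour classes, $s'=s'(\mathcal{H})$. Since every edge of $\mathcal{H}$ has exactly $k$ vertices, $k_i=\max\{|S|:S\in\mathcal{C}_i\}=k$ for every $i$, so Theorem~\ref{hyperpacking2} already produces an $(\mathcal{S},r)$-CFF$(t,n)$ with $t\le\sum_{i=1}^{s'}\bigl(t(r,|\mathcal{C}_i|)+k\,t(r-1,|\mathcal{C}_i|)\bigr)$, and it only remains to estimate $|\mathcal{C}_i|$ and $s'(\mathcal{H})$.

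For the first inequality I would bound $|\mathcal{C}_i|$. Two edges of the same colour in a strong edge-colouring cannot be adjacent, hence in particular are vertex-disjoint (if they shared a vertex $v$, then $v$ together with any other vertex of one of the two edges would be an adjacent pair drawn from distinct same-coloured edges). Thus each $\mathcal{C}_i$ is a family of pairwise disjoint $k$-subsets of the $n$-element vertex set, giving $|\mathcal{C}_i|\le\lfloor n/k\rfloor$. Because $t(d,\cdot)$ is non-decreasing in its second argument (deleting a column of a $d$-CFF yields a $d$-CFF with one fewer column), $t(r,|\mathcal{C}_i|)\le t(r,\lfloor n/k\rfloor)$ and $t(r-1,|\mathcal{C}_i|)\le t(r-1,\lfloor n/k\rfloor)$ for each $i$. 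Substituting these into the sum and observing that all $s'$ summands are now bounded by the same quantity yields $t\le s'(\mathcal{H})\bigl(t(r,\lfloor n/k\rfloor)+k\,t(r-1,\lfloor n/k\rfloor)\bigr)$.

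For the second inequality it remains to bound $s'(\mathcal{H})\le k\Delta+1$, which I would approach by colouring the edges of $\mathcal{H}$ greedily: when an edge $S$ is processed, the colours it must avoid are those of the already-coloured edges $S'$ that \emph{conflict} with $S$, i.e.\ for which some vertex of $S$ and some vertex of $S'$ lie in a common edge of $\mathcal{H}$; any colour outside this forbidden set may be assigned to $S$, and the resulting colouring is a valid strong edge-colouring. The task is thus to bound, for a fixed edge $S$, the number of edges conflicting with it in terms of the edge size $k$ and the maximum vertex degree $\Delta$. I expect this conflict count to be the part of the argument requiring the most care: the naive estimate (first the edges meeting $S$, then the edges meeting those) only yields a bound quadratic in $k$ and $\Delta$, so reaching the linear bound $k\Delta+1$ claimed in the statement is the delicate step, whereas everything else — the reduction to Theorem~\ref{hyperpacking2}, the disjointness observation, and the monotonicity of $t(d,\cdot)$ — is routine. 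Once the bound $s'(\mathcal{H})\le k\Delta+1$ is in hand, chaining it with the first inequality completes the proof.
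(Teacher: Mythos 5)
Your treatment of the first inequality is exactly the paper's argument: instantiate Theorem~\ref{hyperpacking2} on a minimum strong edge-colouring, use $k$-uniformity to get $k_i=k$, note that two distinct edges of the same colour class must be vertex-disjoint (a shared vertex of one and any vertex of the other would be an adjacent pair from distinct same-coloured edges), so $|\mathcal{C}_i|\leq\lfloor n/k\rfloor$, and finish by monotonicity of $t(d,\cdot)$ in its second argument. That part is complete and correct.

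The step you single out as delicate is a genuine gap, and you are right to be suspicious: the inequality $s'(\mathcal{H})\leq k\Delta+1$ cannot be proved because it is false in general. Greedy colouring with $k\Delta+1$ colours only handles an \emph{ordinary} proper edge-colouring, where an edge must avoid the colours of the at most $k(\Delta-1)$ edges meeting it; a \emph{strong} edge-colouring must in addition avoid the colours of every edge containing a vertex adjacent to a vertex of $S$, and the number of such edges is only bounded by roughly $k^2\Delta^2$ (each of the $k$ vertices of $S$ lies in at most $\Delta$ edges, each contributing at most $k-1$ further vertices, each lying in at most $\Delta$ edges), so the bound greedy actually delivers is $s'(\mathcal{H})\leq k(k-1)\Delta^2+k\Delta$, quadratic rather than linear in $k\Delta$. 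A concrete counterexample to the linear claim is $K_{3,3}$ viewed as a $2$-uniform hypergraph: $k=2$, $\Delta=3$, yet any two of its nine edges either share a vertex or have a pair of adjacent endpoints, so every colour class of a strong edge-colouring is a single edge and $s'(K_{3,3})=9>7=k\Delta+1$. (Already for graphs the strong chromatic index is known to be quadratic in $\Delta$ in the worst case.) The paper's own proof of the corollary asserts the $k\Delta+1$ bound with precisely the greedy justification you doubted, so the flaw is in the source, not in your reading of it: the first inequality of the corollary stands, but the second should be replaced by a quadratic bound such as $k(k-1)\Delta^2+k\Delta$, or simply dropped in favour of keeping $s'(\mathcal{H})$ in the statement.
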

\begin{proof}
We can apply Theorem~\ref{hyperpacking2} with $s'=s'(H)$, $k_i=k$ for all $1\leq i\leq \ell$ and note that each colour class contains at most $\lfloor n/k \rfloor$ edges so that $|\mathcal{C}_i| \leq \lfloor n/k \rfloor$. In addition, $s'(\mathcal{H}) \leq (k \Delta +1)$, since a greedy colouring algorithm using $(k \Delta +1)$ colours always succeeds to find a strong colouring. \qed

\end{proof}

\begin{example}
Consider the scenario of Example~\ref{ex:highschool} and $r=2$.
We can find a strong colouring for the hypergraph of that example with $s'=6$ colours with colour classes:
$\{course 1, course 4\}, \{course 2, course 5\}, \{course 3,course 6\},\\ \{ course 7,course 10\},
\{course 8,course 11 \}, \{course 9,course 12\}.$
For each colour class we can use identity matrices $I_2$ as the $2$-CFF$(2,2)$ and $I_2$ as the $1$-CFF$(2,2)$ required so that $t_i+k_it'_i=8$.
If there are outbreaks in 2 courses, any set of up to $6$ students in these 2 courses can be detected with $48$ tests.
This is a toy example, and of course using $6\times8 = 48$ tests is not worth it, since it is better testing the $18$ students individually.
The construction would be advantageous if we have less colour classes with more edges in each, like in the next example.

\end{example}

\begin{example}\label{ex:grid}
Consider a venue with $4356$ people sitting in a square auditorium of 66 rows with 66 seats per row.
Edges are sets of individuals sitting nearby.
We consider edges of size $9$ consisting of all possible contiguous $3 \times 3$ squares (see Fig.~\ref{EX5grid}); there are 9 edges passing through each vertex as shown in Fig.~\ref{EX5edges}.
There is a strong colouring with $\ell=36$ colour classes of $11\times 11= 121$ edges each: we need 4 colours to ``tile" the room with edges and 9 such tilings to cover all edges (see Fig.~\ref{EX5strong}).
For each colour class we use a $2$-CFF($25,125$) using the polynomial construction from Proposition~\ref{restrictedCFF} for $q=5$ and $k=2$. This gives the part corresponding to the matrices $M_1, \ldots, M_{36}$ in Theorem~\ref{hyperpacking2} totalling $36\times 25=900$ tests.
For the $N_1, \ldots N_{36}$, each of which is supposed to be a 1-CFF($t,121$) multiplied by $I_9$, we use instead a single matrix $N$ built as follows. 
Take $A$ as a $1$-CFF($12,484$) obtained from the Sperner construction and do $N=A\otimes I_9$ with $108$ rows. Carefully assign vertices in the grid to the columns of matrix $N$ so that each $3\times 3$ square corresponds to a block of identity matrix $I_9$ in a tiling fashion (see Fig.~\ref{EX5_N}). This is enough to identify each non-defective vertex that lies inside one of the two defective edges, which is the purpose of $N$. Therefore with a total of $1008$ tests we can screen $4356$ people for any $18$ infected people that appear within any $2$ regions of size $3\times 3$.
\begin{figure}[h]
\centering
\includegraphics[width=0.3\textwidth]{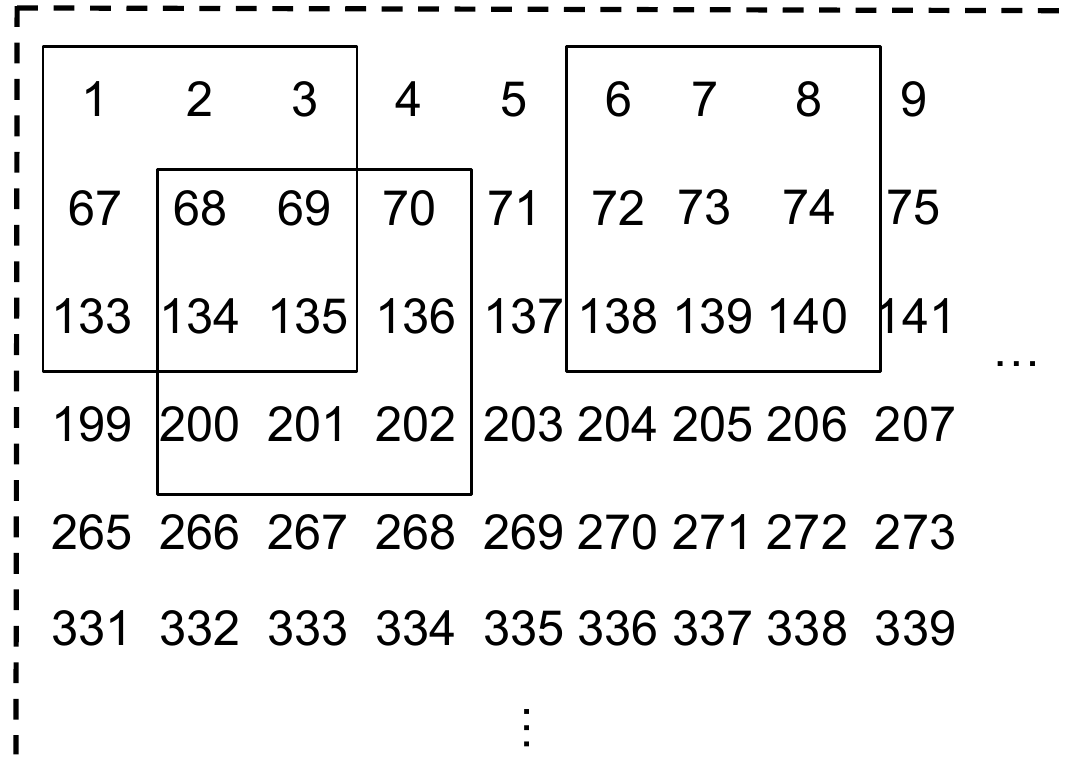}
\caption{Each vertex is a person in a $66\times66$ auditorium numbered $1$ to $4356$. Edges are sets of 9 people sitting nearby ($3\times3$ squares); 3 different edges are shown. There is less than $4356$ edges, since edges centered at extreme vertices are not used; for example no edge has vertex 1, 4 or 199 as their center.
More precisely, there are $4094$ edges.} \label{EX5grid}
\end{figure}

\begin{figure}[h]
\centering
\includegraphics[width=0.3\textwidth]{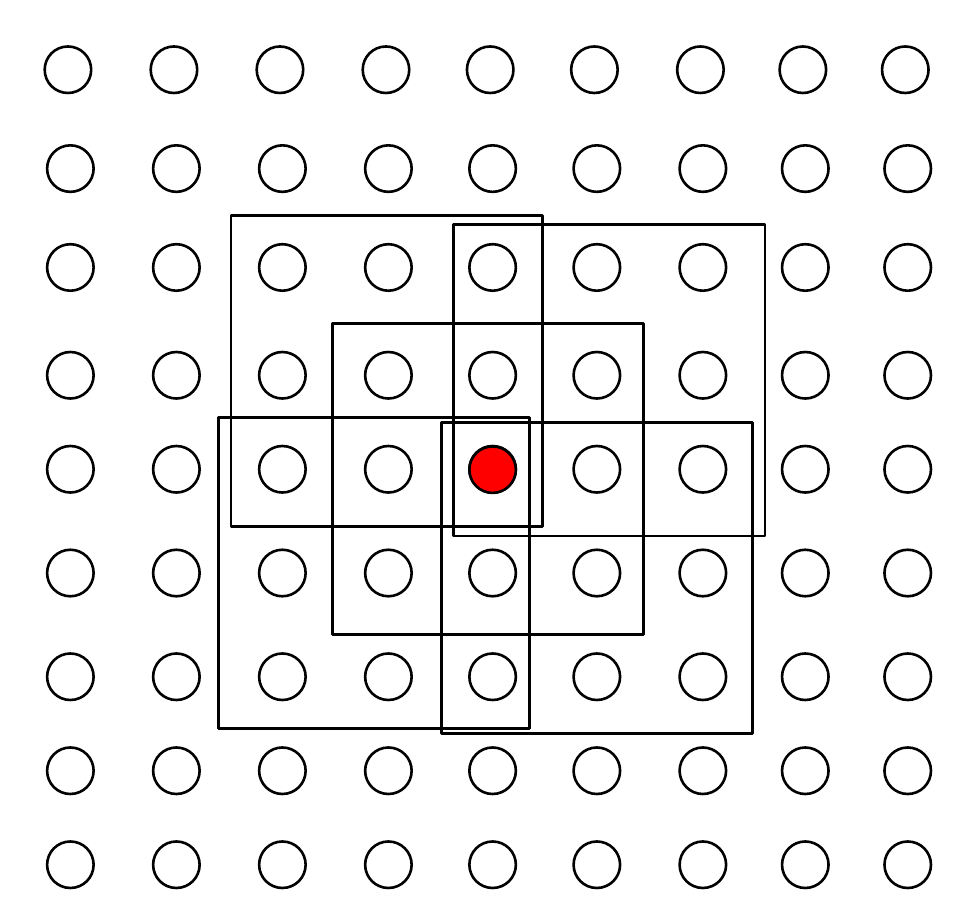}
\includegraphics[width=0.3\textwidth]{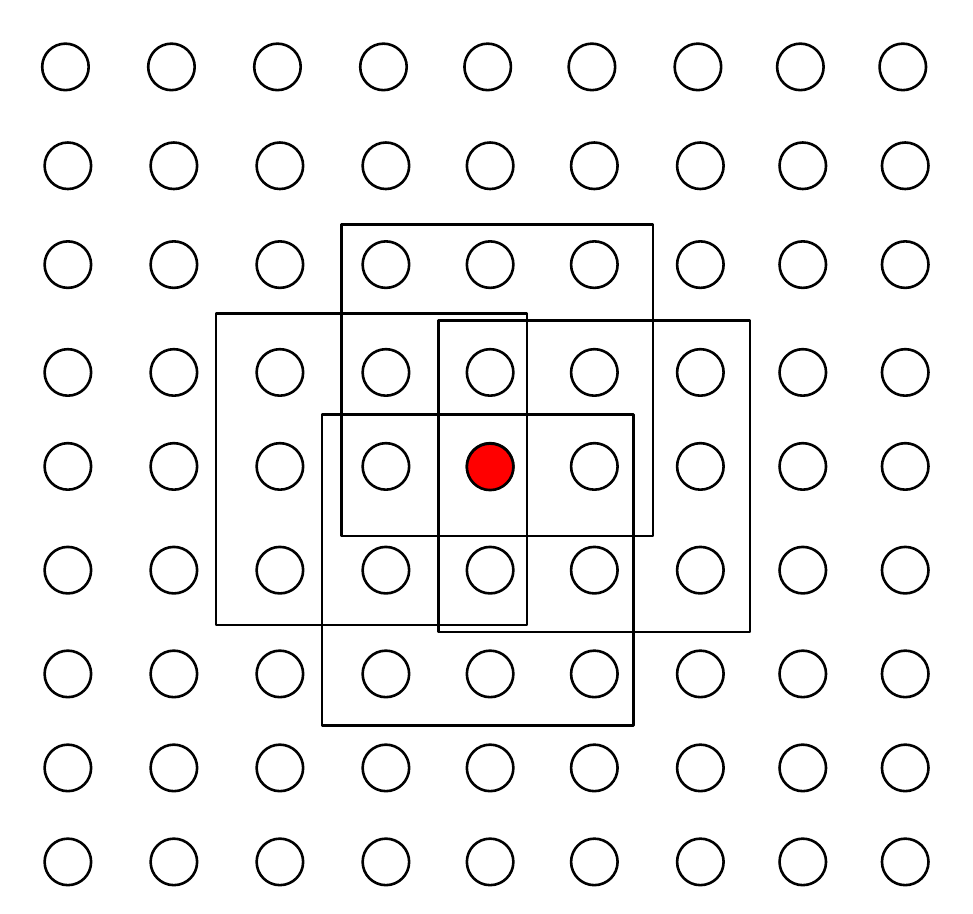}
\caption{Nine edges passing through a vertex. Right picture showing $5$ of them, left one showing the remaining $4$. The squares are the regions that each person belongs to.} \label{EX5edges}
\end{figure}

\begin{figure}[h]
\centering
\includegraphics[width=0.3\textwidth]{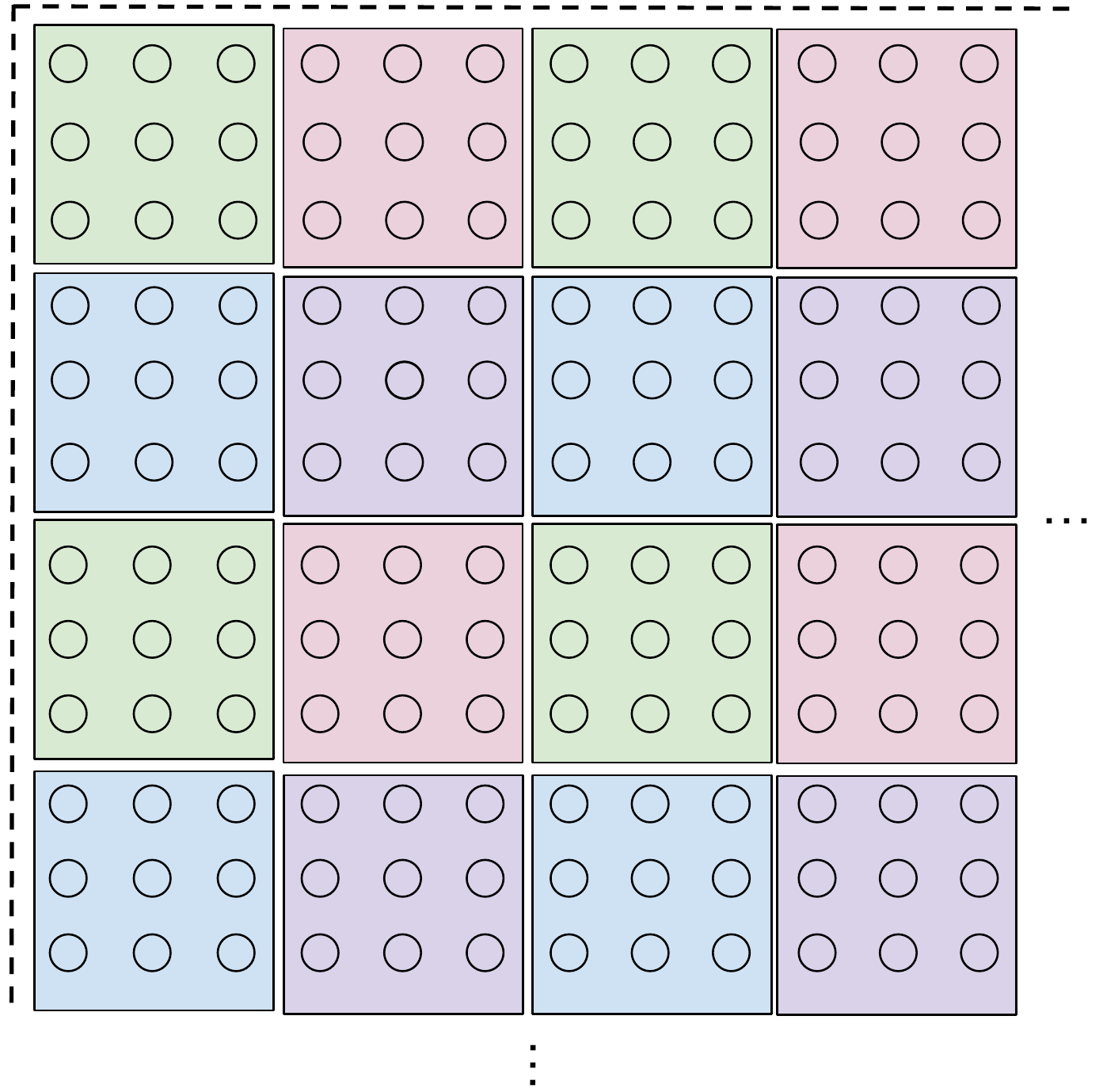}
\includegraphics[width=0.3\textwidth]{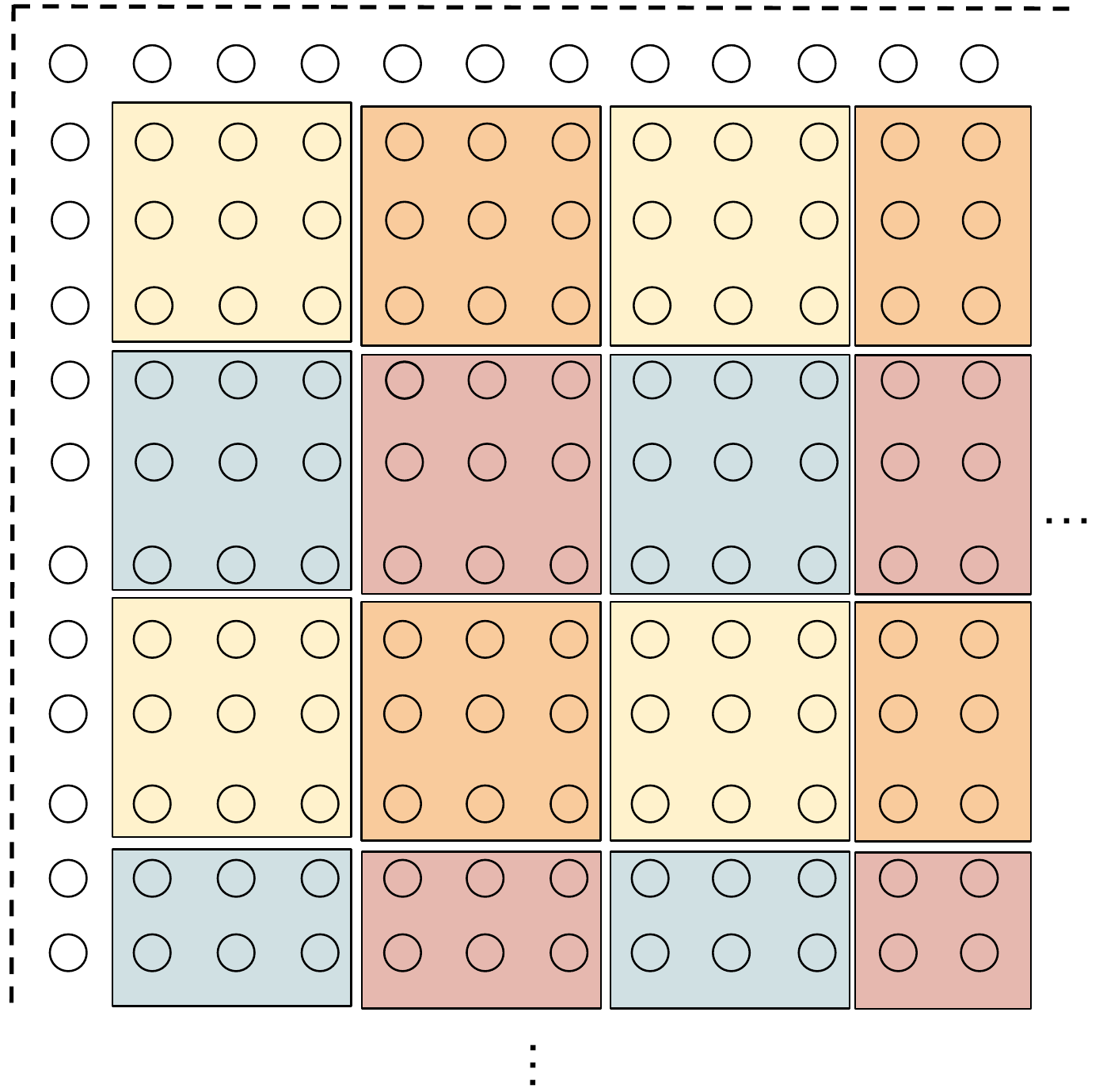}
\caption{A strong colouring requires 36 colour classes. We can cover all the edges with 9 possible tilings of the grid with 4 colours in each tiling, yielding 36 colour classes. Only two of the tilings are shown.} \label{EX5strong}
\end{figure}

\begin{figure}[h]
\centering
\includegraphics[width=0.3\textwidth]{Desenho3_a.pdf} 
\includegraphics[width=0.3\textwidth]{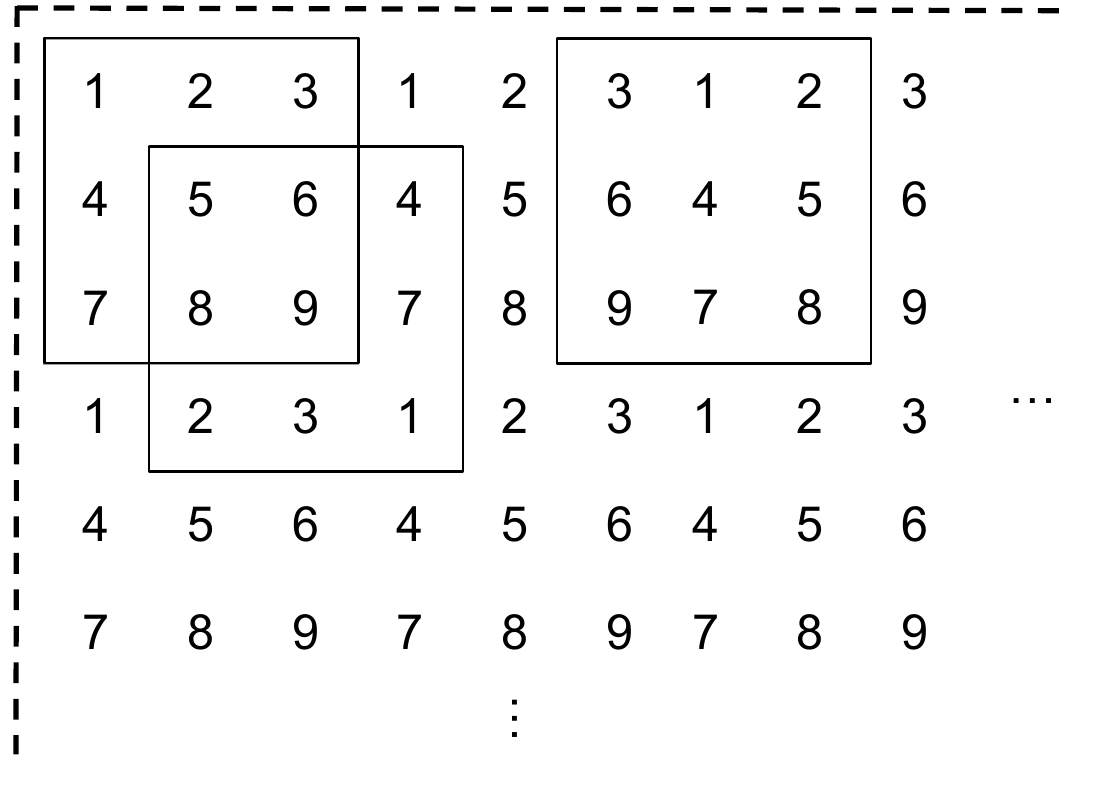}
\caption{Each vertex in the left picture is assigned one of the 9 columns of the identity matrix $I_9$ according to the right picture. This allows to create matrix $N$ which is the cross product of a $1$-CFF$(12,484)$ array construction and the identity matrix $I_9$. }
 \label{EX5_N}
\end{figure}
\end{example}

\newpage

\end{document}